%% file: main.tex
\documentclass[sigconf,nonacm]{acmart}
\setcopyright{none}
\renewcommand\footnotetextcopyrightpermission[1]{}
\usepackage{adjustbox,booktabs,algorithm,algpseudocode,xspace,multirow,subcaption,listings,tikz}
\usetikzlibrary{positioning,arrows.meta,decorations.pathreplacing,calc,fit,patterns,backgrounds}
\input{macros}
\input{claims_macros}
\begin{document}
\title{Semantic Views: Certified Reuse of LLM Labels Across Semantic Queries}
\author{Oktie Hassanzadeh}
\affiliation{\institution{IBM}\country{}}
\email{hassanzadeh@us.ibm.com}
\author{Dharmashankar Subramanian}
\affiliation{\institution{IBM}\country{}}
\email{dharmash@us.ibm.com}

\begin{abstract}
\input{sections/abstract}
\end{abstract}

\ccsdesc[500]{Information systems~Database query processing}
\ccsdesc[500]{Information systems~Query optimization}
\ccsdesc[300]{Computing methodologies~Natural language processing}
\keywords{semantic operators, LLM-powered query processing, materialized views, result reuse,
approximate query processing, statistical guarantees}

\maketitle

\input{sections/introduction}
\input{sections/background}
\input{sections/views}
\input{sections/reuse}
\input{sections/maintenance}
\input{sections/system}
\input{sections/evaluation}
\input{sections/discussion}
\input{sections/related}
\input{sections/conclusion}

\bibliographystyle{ACM-Reference-Format}
\bibliography{references}

\appendix
\input{appendix}
\end{document}

%% file: macros.tex
\newcommand{\sys}{\textsc{SemViews}\xspace}
\newcommand{\lotus}{LOTUS\xspace}
\newcommand{\cf}{\textsc{CertifiedFilter}\xspace}
\newcommand{\eps}{\varepsilon}
\newcommand{\sel}{\mathrm{sel}}
\newcommand{\contained}[1]{\sqsubseteq_{#1}}

\theoremstyle{acmdefinition}
\newtheorem{problem}{Problem}

%% file: claims_macros.tex
\newcommand{\claimEOneAurocDbpedia}{0.95\xspace}
\newcommand{\claimEOneAurocGoemotions}{0.95\xspace}
\newcommand{\claimEOneAurocPubmed}{0.87\xspace}
\newcommand{\claimEOneAurocReviews}{0.82\xspace}
\newcommand{\claimEOneContainZeroOnePairs}{92\xspace}
\newcommand{\claimEOneJudgedImplies}{146\xspace}
\newcommand{\claimEOneJudgedImpliesBadShare}{56\%\xspace}
\newcommand{\claimEOnePairs}{5,280\xspace}
\newcommand{\claimEOneZeroAdversarialCostRatioMax}{0.98\xspace}
\newcommand{\claimEOneZeroAdversarialMetMin}{99.0\%\xspace}
\newcommand{\claimEOneZeroInvertedSavingMax}{40\%\xspace}
\newcommand{\claimEOneZeroInvertedSavingMin}{16\%\xspace}
\newcommand{\claimEOneZeroSimcacheMet}{0\%\xspace}
\newcommand{\claimEOneOneAppendMet}{100.0\%\xspace}
\newcommand{\claimEOneOneAppendShareMax}{49\%\xspace}
\newcommand{\claimEOneOneAppendShareMin}{32\%\xspace}
\newcommand{\claimEOneOneSimcacheMet}{0\%\xspace}
\newcommand{\claimEOneTwoStrictLotusMissMax}{48.0\%\xspace}
\newcommand{\claimEOneTwoStrictLotusMissMin}{1.3\%\xspace}
\newcommand{\claimEOneTwoStrictLotusValidDatasets}{1\xspace}
\newcommand{\claimEOneThreeHumanLlmAgreeMedian}{97.7\%\xspace}
\newcommand{\claimEOneThreeHumanLlmFOneMedian}{0.65\xspace}
\newcommand{\claimEOneThreeHumanMaxMiss}{0.6\%\xspace}

\newcommand{\claimEOneThreeHumanOrders}{4\xspace}
\newcommand{\claimEOneThreeHumanPreds}{40\xspace}

\newcommand{\claimEOneThreeHumanShareLarge}{20\%\xspace}
\newcommand{\claimEOneThreeHumanShareSmall}{52\%\xspace}
\newcommand{\claimEOneThreeLlmDbpediaMaxMiss}{1.2\%\xspace}

\newcommand{\claimEOneThreeLlmDbpediaOrders}{20\xspace}
\newcommand{\claimEOneThreeLlmDbpediaPreds}{8\xspace}

\newcommand{\claimEOneThreeLlmDbpediaSelMax}{19.6\%\xspace}
\newcommand{\claimEOneThreeLlmDbpediaSelMin}{0.9\%\xspace}
\newcommand{\claimEOneThreeLlmDbpediaShareLarge}{34\%\xspace}
\newcommand{\claimEOneThreeLlmDbpediaShareSmall}{60\%\xspace}
\newcommand{\claimEOneThreeLlmLabels}{1,600,000\xspace}
\newcommand{\claimEOneThreeLlmReviewsMaxMiss}{0.6\%\xspace}

\newcommand{\claimEOneThreeLlmReviewsRareShareLarge}{66\%\xspace}
\newcommand{\claimEOneThreeLlmReviewsRareShareSmall}{95\%\xspace}
\newcommand{\claimEOneThreeLlmReviewsSelMax}{75.0\%\xspace}
\newcommand{\claimEOneThreeLlmReviewsSelMin}{0.4\%\xspace}
\newcommand{\claimEOneThreeLlmReviewsShareLarge}{55\%\xspace}
\newcommand{\claimEOneThreeLlmReviewsShareSmall}{77\%\xspace}
\newcommand{\claimEOneThreeRows}{100,000\xspace}

\newcommand{\claimETwoBargainSavingMax}{10\%\xspace}
\newcommand{\claimETwoBargainSavingMin}{2\%\xspace}
\newcommand{\claimETwoBreakdownCertificationMax}{32\%\xspace}
\newcommand{\claimETwoBreakdownCertificationMin}{18\%\xspace}
\newcommand{\claimETwoBreakdownGapClosingMax}{20\%\xspace}
\newcommand{\claimETwoBreakdownGapClosingMin}{6\%\xspace}
\newcommand{\claimETwoBreakdownOracleRegionMax}{11\%\xspace}
\newcommand{\claimETwoBreakdownOracleRegionMin}{6\%\xspace}
\newcommand{\claimETwoBreakdownRefitRoundsMax}{54\%\xspace}
\newcommand{\claimETwoBreakdownRefitRoundsMin}{41\%\xspace}
\newcommand{\claimETwoBreakdownShapingMax}{7\%\xspace}
\newcommand{\claimETwoBreakdownShapingMin}{6\%\xspace}
\newcommand{\claimETwoCloseStepsMedianMax}{4\xspace}
\newcommand{\claimETwoCloseStepsPNineZeroMax}{41\xspace}
\newcommand{\claimETwoCloseStepsZeroShare}{54\%\xspace}

\newcommand{\claimETwoLateShareMin}{52\%\xspace}
\newcommand{\claimETwoLateViewGainMax}{31\%\xspace}
\newcommand{\claimETwoLateVsBargainMax}{1.89$\times$\xspace}
\newcommand{\claimETwoLateVsBargainMin}{1.32$\times$\xspace}

\newcommand{\claimETwoRatioValidMax}{1.32$\times$\xspace}
\newcommand{\claimETwoRatioValidMin}{1.04$\times$\xspace}
\newcommand{\claimETwoRefineCutInvMax}{1.27$\times$\xspace}
\newcommand{\claimETwoRefineCutInvMin}{1.09$\times$\xspace}
\newcommand{\claimETwoRefineCutMax}{21\%\xspace}
\newcommand{\claimETwoRefineCutMin}{8\%\xspace}
\newcommand{\claimETwoRefitRoundsMedian}{6\xspace}
\newcommand{\claimETwoRefitRoundsPNineZero}{8\xspace}

\newcommand{\claimETwoSavingRatioMax}{18$\times$\xspace}
\newcommand{\claimETwoSavingRatioMin}{4$\times$\xspace}
\newcommand{\claimETwoSemviewsSavingMax}{45\%\xspace}
\newcommand{\claimETwoSemviewsSavingMin}{28\%\xspace}

\newcommand{\claimETwoSemweaveShareMax}{91\%\xspace}
\newcommand{\claimETwoSemweaveShareMin}{49\%\xspace}

\newcommand{\claimETwoUsdCallGap}{0.7\xspace}
\newcommand{\claimETwoViewGainMax}{25\%\xspace}
\newcommand{\claimETwoViewGainMin}{4\%\xspace}
\newcommand{\claimETwoVsBargainMax}{1.71$\times$\xspace}
\newcommand{\claimETwoVsBargainMin}{1.32$\times$\xspace}

\newcommand{\claimETwoVsBargainViewsVOneMax}{1.18$\times$\xspace}
\newcommand{\claimETwoVsBargainViewsVOneMin}{0.99$\times$\xspace}
\newcommand{\claimEThreeBargainMaxMiss}{0.6\%\xspace}
\newcommand{\claimEThreeCheaperThanSemviewsMinMiss}{32.4\%\xspace}
\newcommand{\claimEThreeDelta}{0.05\xspace}

\newcommand{\claimEThreeLotusMaxMiss}{65.4\%\xspace}
\newcommand{\claimEThreeLotusMinMiss}{30.8\%\xspace}
\newcommand{\claimEThreeLotusPredicates}{145\xspace}
\newcommand{\claimEThreeLotusPredicatesAboveDelta}{102\xspace}
\newcommand{\claimEThreeLotusCascadeFullMaxMiss}{57.5\%\xspace}
\newcommand{\claimEThreeProxyLrMaxMiss}{65.4\%\xspace}
\newcommand{\claimEThreeSemviewsMaxMiss}{0.7\%\xspace}
\newcommand{\claimEThreeSemviewsMsPerQuery}{849\xspace}
\newcommand{\claimEThreeSemviewsPooledMet}{99.63\%\xspace}
\newcommand{\claimEThreeSemviewsPooledMiss}{0.37\%\xspace}
\newcommand{\claimEThreeSemviewsPredicates}{145\xspace}
\newcommand{\claimEThreeSemviewsPredicatesAboveDelta}{0\xspace}
\newcommand{\claimEThreeSemviewsQueries}{145,000\xspace}
\newcommand{\claimEThreeSemviewsSeeds}{1,000\xspace}
\newcommand{\claimEThreeSemviewsWorstPredicateMiss}{3.2\%\xspace}
\newcommand{\claimEThreeSemviewsWorstPredicateUpper}{4.5\%\xspace}
\newcommand{\claimEThreeSemweaveMaxMiss}{67.8\%\xspace}
\newcommand{\claimEThreeSemweaveMinMiss}{6.2\%\xspace}
\newcommand{\claimEThreeSemweaveMissGivenReuse}{65\%\xspace}
\newcommand{\claimEThreeSemweaveRecallPFive}{0.22\xspace}
\newcommand{\claimEThreeSemweaveReuseShare}{56\%\xspace}

\newcommand{\claimEThreeSimcacheMaxMiss}{80\%\xspace}
\newcommand{\claimEThreeSimcacheMinMiss}{31\%\xspace}
\newcommand{\claimEThreeSimcacheMissGivenHit}{100\%\xspace}
\newcommand{\claimEFourRowsLargeShare}{57\%\xspace}
\newcommand{\claimEFourRowsSmallShare}{81\%\xspace}
\newcommand{\claimEFiveGoemotionsHalfSaving}{13\%\xspace}
\newcommand{\claimEFivePartialOtherMaxDev}{3.6\%\xspace}
\newcommand{\claimESevenCrossModelMet}{100\%\xspace}
\newcommand{\claimESevenCrossModelSaving}{18\%\xspace}
\newcommand{\claimESevenSimcacheCrossMet}{0\%\xspace}
\newcommand{\claimENineWorstOverhead}{0.0\%\xspace}
\newcommand{\claimFloorCoef}{33\xspace}
\newcommand{\claimFloorDrawsAtOnePct}{3,300\xspace}
\newcommand{\claimHeadroomOperatorShare}{60\%\xspace}
\newcommand{\claimHeadroomPerfectCommonShare}{14\%\xspace}
\newcommand{\claimHeadroomPerfectRareShare}{80\%\xspace}
\newcommand{\claimHeadroomPerfectShare}{31\%\xspace}
\newcommand{\claimHeadroomScaleLargePerfectShare}{7\%\xspace}
\newcommand{\claimHeadroomScaleLargeShapedShare}{48\%\xspace}
\newcommand{\claimHeadroomScaleLargeShare}{51\%\xspace}

\newcommand{\claimHeadroomScaleMidShare}{52\%\xspace}
\newcommand{\claimHeadroomWarmExtraMax}{16\%\xspace}
\newcommand{\claimHeadroomWarmExtraMin}{5\%\xspace}
\newcommand{\claimNoiseAgreement}{99.85\%\xspace}
\newcommand{\claimSelHighShare}{54\%\xspace}
\newcommand{\claimSelLowPreds}{44\xspace}
\newcommand{\claimSelLowShare}{80\%\xspace}
\newcommand{\claimSelPreds}{145\xspace}

\newcommand{\claimTapeOracleCalls}{725,000\xspace}
\newcommand{\claimViewsPartnersGoemotions}{25\xspace}
\newcommand{\claimViewsPartnersOtherMax}{9\xspace}
\newcommand{\claimViewsRelatedGainMax}{38\%\xspace}
\newcommand{\claimViewsRelatedGainMin}{11\%\xspace}

\newcommand{\claimViewsUnrelatedGainMax}{14\%\xspace}
\newcommand{\claimViewsUnrelatedGainOtherMax}{5\%\xspace}
\newcommand{\claimViewsVOneRelatedGainMax}{31\%\xspace}
\newcommand{\claimViewsVOneRelatedGainMin}{2\%\xspace}
\newcommand{\claimWrAgreeMax}{99.8\%\xspace}
\newcommand{\claimWrAgreeMin}{76\%\xspace}
\newcommand{\claimWrJaccardMax}{0.95\xspace}
\newcommand{\claimWrJaccardMin}{0.57\xspace}
\newcommand{\claimWrOrdersCost}{20\xspace}
\newcommand{\claimWrOrdersValid}{200\xspace}

\newcommand{\claimWrSemviewsMetMin}{99.8\%\xspace}
\newcommand{\claimWrSemviewsMissMax}{0.2\%\xspace}
\newcommand{\claimWrSemviewsShareMax}{53\%\xspace}
\newcommand{\claimWrSemviewsShareMin}{47\%\xspace}

\newcommand{\claimWrTwins}{15\xspace}
\newcommand{\claimWrUnsafeMetMax}{47\%\xspace}
\newcommand{\claimWrUnsafeMetMin}{17\%\xspace}
\newcommand{\claimWrUnsafeMissMax}{83\%\xspace}
\newcommand{\claimWrUnsafeMissMin}{53\%\xspace}
\newcommand{\claimWrViewGainMax}{36\%\xspace}
\newcommand{\claimWrViewGainMin}{27\%\xspace}

%% file: sections/abstract.tex
Semantic query engines evaluate natural-language filters with a large language model, and every
evaluation is a costly oracle call. Optimizers that reduce these calls face a dilemma: cascades
that trust a cheap proxy miss their precision and recall targets on many queries, methods with
finite-sample guarantees save little, and all of them discard the oracle's labels when the query
ends. We present \sys, a certified semantic filter that orders tuples by signals that cost no
oracle calls (the labels that earlier queries left behind as \emph{semantic views}, a local
embedding, and a helper model's score), refines that order with labels it must pay for anyway,
and certifies its answer with fresh stratified oracle samples. The answer meets user precision and
recall targets with probability at least $1-\delta$ for every workload and every catalog, so a stale
or misleading view can cost oracle calls but never accuracy. On four workloads replayed against
\claimTapeOracleCalls{} recorded oracle labels, \sys meets both targets on
\claimEThreeSemviewsPooledMet{} of queries and saves
\claimETwoSemviewsSavingMin{}--\claimETwoSemviewsSavingMax{} of the oracle calls, whereas the only
prior method in our study with the same guarantee needs
\claimETwoVsBargainMin{}--\claimETwoVsBargainMax{} as many calls. When a question returns in
different words, views cut its calls by a further \claimWrViewGainMin{}--\claimWrViewGainMax{},
with both targets met on at least \claimWrSemviewsMetMin{} of such queries; similarity caches and
entailment-based reuse meet them on only
\claimWrUnsafeMetMin{}--\claimWrUnsafeMetMax{}.

%% file: sections/introduction.tex
\section{Introduction}
\label{sec:intro}

Semantic query engines let users filter, join, and rank data with natural-language predicates
that a large language model (LLM) evaluates tuple by
tuple~\cite{patel2025lotus,liu2024palimpzest,dai2024uqe}, and cloud data warehouses now expose
the same capability as AI functions in SQL~\cite{liskowski2026cortex}. Each evaluation is an
\emph{oracle call}, and oracle calls dominate the cost and latency of these queries. A growing body
of work therefore optimizes one semantic operator at a time. Model cascades route easy tuples to
a cheap helper model and still meet precision and recall
guarantees~\cite{patel2025lotus,kang2020supg,zeighami2025bargain}; proxies, clusters, and learned
orderings cut the oracle's share further~\cite{chung2026proxy,hou2026csv,zhao2026larch}. These
methods face a dilemma. Cascades that trust a proxy's calibration miss their targets on a large
share of queries: on our workloads, the \lotus cascade misses on
\claimEThreeLotusMinMiss{}--\claimEThreeLotusMaxMiss{} of them. Methods with finite-sample
guarantees keep their targets, but with a weak proxy they save little:
BARGAIN~\cite{zeighami2025bargain} with the helper model of the \lotus cascade saves only
\claimETwoBargainSavingMin{}--\claimETwoBargainSavingMax{} of the oracle calls. Across queries,
these methods reuse at most query-independent artifacts, such as embedding
indexes~\cite{kang2022tasti}, or thresholds and labels for the \emph{same} predicate. The labels
paid for one predicate are never used to answer another.

\begin{example}[Running example]
\label{ex:running}
An analyst explores a table of product reviews and first asks for reviews that ``complain about
the battery'', then for reviews that ``report a hardware defect'', and then for reviews that ``say
the battery overheats''. The third predicate is narrower than the first, and the second overlaps
both. An exact-match cache never hits, because no two predicates are the same. Yet the labels
already paid for say a great deal about the third question: a review the oracle judged not to
complain about the battery is unlikely to say that the battery overheats.
\end{example}

\noindent
Semantic queries rarely arrive alone. Analysts refine predicates during exploration, and agents
issue such refinements at machine speed~\cite{liu2026agentfirst}. This paper asks whether the
labels of earlier predicates can help answer new, different predicates \emph{without weakening
the guarantee} that the best per-query optimizers give.

\paragraph{Reuse is tempting, but unsafe.}
Two kinds of reuse suggest themselves. A similarity cache answers a new predicate with the stored
answer of a similar one~\cite{gptcache}. Systems that infer containment between predicates, by
logic or by an LLM's entailment judgment~\cite{mahmood2026semweave,xu2022eva}, reuse an answer
when one predicate seems to imply another. Both assume that relations between predicates are
crisp. On the oracle's actual labels, they are not. Across our four workloads,
\claimEOneJudgedImpliesBadShare{} of the implications that an LLM judge asserts are violated by
more than 10\% of the antecedent's positives, and containment errors compound along chains of
predicates (Section~\ref{sec:views}). As a result, the answers that similarity caches serve on a hit miss a precision or
recall target in \claimEThreeSimcacheMissGivenHit{} of cases in our workloads, and SemWeave's entailment-based reuse~\cite{mahmood2026semweave} misses on up to
\claimEThreeSemweaveMaxMiss{} of queries (Section~\ref{sec:eval-validity}).

\paragraph{Our approach.}
\sys is a certified filter that uses everything it has for free to decide where to look, and only
fresh oracle samples to decide what is true (Figure~\ref{fig:running}). Three signals cost no oracle
calls: a helper model's score, a local embedding of each tuple, and, new in \sys, the labels that
earlier queries left behind. We keep those labels as \emph{semantic views}: label columns bound to
a predicate, the columns it reads, and an oracle model, possibly partial, with the provenance of
every label. \sys fits a score for the new predicate on these signals, refines it with the labels
of the tuples it is least sure about, which it would have to pay for anyway, and splits the score
order into a reject region, an oracle region, and an accept region. \emph{Fresh} oracle samples,
drawn uniformly after the regions are fixed, certify both outer regions with exact finite-sample
bounds; the oracle region is labeled in full. The guarantee therefore holds for every workload and
every catalog: a stale, partial, or misleading view costs oracle calls, never correctness, and
reuse across queries is safe by construction.

\begin{figure}[t]
\centering
\begin{tikzpicture}[x=0.42cm,y=0.30cm,font=\scriptsize,
  cell/.style={draw=black!45,minimum width=0.36cm,minimum height=0.30cm,inner sep=0pt},
  pcell/.style={cell,fill=blue!45}, ncell/.style={cell,fill=white}, ucell/.style={cell,fill=black!12}]
  \foreach \r/\a/\b in {1/pcell/pcell,2/pcell/ncell,3/ncell/pcell,4/ucell/ncell,5/ncell/ucell,6/ncell/ncell,7/pcell/ucell,8/ncell/ncell}{
    \node[\a] at (0,-\r) {}; \node[\b] at (2.4,-\r) {};
    \node[anchor=east] at (-0.6,-\r) {$t_{\r}$};
  }
  \foreach \r/\s in {1/80,2/60,3/35,4/20,5/15,6/5,7/50,8/10}{ \node[cell,fill=orange!\s] at (4.8,-\r) {}; }
  \foreach \r in {1,...,8}{ \node[cell] at (7.4,-\r) {\textbf{?}}; }
  \node at (0,0.2) {$V_{\text{battery}}$};
  \node at (2.4,0.2) {$V_{\text{defect}}$};
  \node at (4.8,0.2) {helper};
  \node at (7.4,0.2) {$q$};
  \node[font=\scriptsize\itshape] at (2.4,1.5) {catalog of views};
  \node[font=\scriptsize\itshape] at (7.4,1.5) {new query};
  \draw[-{Latex[length=1.5mm]}] (2.4,-9.3) -- (2.4,-10.6) -| (9.6,-6.4)
     node[pos=0.25,below,align=center,font=\tiny]{score $\hat\pi$, fitted on a paid shaping sample} -- (10.4,-6.4);
  \begin{scope}[shift={(11.2,0)}]
    \foreach \i in {1,...,7}{ \node[cell,fill=black!5] at (0,-\i*0.9) {}; }
    \foreach \i in {8,...,11}{ \node[cell,fill=black!25] at (0,-\i*0.9) {}; }
    \foreach \i in {12,...,14}{ \node[cell,fill=blue!25] at (0,-\i*0.9) {}; }
  \end{scope}
  \foreach \y in {-1.8,-4.5,-11.7}{ \fill (10.5,\y) circle (2.2pt); }
  \draw[decorate,decoration={brace,amplitude=3pt}] (11.8,-0.6) -- (11.8,-6.6)
     node[midway,right=2pt,align=left]{reject region $X$:\\fresh draws bound\\its positives};
  \draw[decorate,decoration={brace,amplitude=3pt}] (11.8,-6.9) -- (11.8,-10.3)
     node[midway,right=2pt,align=left]{oracle region $O$:\\labeled in full};
  \draw[decorate,decoration={brace,amplitude=3pt}] (11.8,-10.6) -- (11.8,-13.0)
     node[midway,right=2pt,align=left]{accept region $A$:\\fresh draws bound\\its negatives};
  \node[font=\tiny] at (11.2,0.3) {low $\hat\pi$};
  \node[font=\tiny] at (11.2,-13.9) {high $\hat\pi$};
  \node[font=\tiny,anchor=east] at (10.3,-13.0) {$\bullet$ fresh sample};
\end{tikzpicture}
\caption{Certified reuse on the running example. Past views (blue: positive, white: negative,
gray: unknown) and the helper score are features of a score $\hat\pi$ for the new predicate $q$.
The planner splits the score order into three regions. Only fresh oracle samples drawn after the
split enter the confidence bounds, so wrong views change the cost, never the guarantee.}
\label{fig:running}
\end{figure}

\paragraph{Technical challenges.}
Certified reuse must solve four problems that per-query optimizers do not face.
\emph{(1) Untrusted side information.} A view of another predicate can be arbitrarily wrong about
the new one, so no bound may depend on view labels being right.
\emph{(2) Biased labels.} The labels in a view were chosen by an earlier query's plan, typically
the tuples that were uncertain for a different predicate. They are not a sample of anything for
the new predicate, and an estimator that treats them as one is biased.
\emph{(3) Data-dependent design.} To profit from views, the regions must depend on the data, yet
finite-sample bounds are valid only when the regions are fixed before the draws that certify
them.
\emph{(4) Planning under uncertainty.} Choosing regions and sample sizes requires predicting
whether certification will succeed. A wrong prediction should cost a few extra calls, not
validity, and should not force the query to fall back to labeling every tuple.

\cf, the operator at the core of \sys, addresses all four by separating \emph{ordering} from
\emph{certification}. Views, helper scores, and embeddings enter only a score fitted on paid labels: a small uniform shaping sample,
refined with labels of the tuples the score is least sure about (challenges 1 and 2). The planner
fixes the regions from that score before any certification draw, and draws without replacement
make exact hypergeometric bounds valid for the finite regions (challenge 3). When the planner's
prediction falls short, a deterministic gap-closing step labels the most uncertain tuples, and a
single coverage event covers every state in which the operator can stop (challenge 4).

\paragraph{Contributions.}
\begin{itemize}
\item We present \cf, a semantic filter that meets user precision and recall targets with
probability $1-\delta$ while ordering tuples by free side information, refining that order with
labels it must pay for anyway, and certifying two regions with exact finite-sample bounds chosen by
a planner that minimizes predicted oracle calls (Section~\ref{sec:reuse}).
\item We make the oracle's labels reusable across queries: semantic views with provenance, used only
to plan, so that reuse cannot break the guarantee for any catalog, including adversarial ones
(Sections~\ref{sec:background}, \ref{sec:views}, and~\ref{sec:maintenance}).
\item We show that uncertified reuse is unsafe: relations between real predicates are statistical
and degrade along chains, more than half of the implications an LLM judges fail on the oracle's
labels, and even the same question asked in different words receives different labels
(Sections~\ref{sec:views} and~\ref{sec:eval}).
\item We implement \sys as a \lotus optimizer that rewrites semantic filters without changing the
engine, and evaluate it on four predicate pools and a workload of reworded questions against
per-query cascades, BARGAIN, exact and similarity caches, and SemWeave's entailment-based reuse,
replayed on recorded oracle labels (\claimTapeOracleCalls{} for the four pools alone;
Sections~\ref{sec:system} and~\ref{sec:eval}).
\end{itemize}

\paragraph{Summary of results.}
\sys missed a target on \claimEThreeSemviewsPooledMiss{} of \claimEThreeSemviewsQueries{} replayed
queries, below $\delta$, and no single predicate exceeded $\delta$. It saved
\claimETwoSemviewsSavingMin{}--\claimETwoSemviewsSavingMax{} of the oracle calls of exact evaluation,
\claimETwoSavingRatioMin{}--\claimETwoSavingRatioMax{} as much as BARGAIN with the helper score, and
every method that is cheaper misses its targets on at least
\claimEThreeCheaperThanSemviewsMinMiss{} of queries. Views
add to these savings in proportion to how related a workload's predicates are: they removed
\claimViewsRelatedGainMin{}--\claimViewsRelatedGainMax{} of the calls of a query when an earlier query
asked a strongly correlated predicate, and \claimWrViewGainMin{}--\claimWrViewGainMax{} when it asked the
same question in different words, where similarity caches and SemWeave miss their targets on
\claimWrUnsafeMissMin{}--\claimWrUnsafeMissMax{} of queries; on unrelated predicates they add no
measurable cost.
Much of the remaining cost is a certification floor that depends on a predicate's selectivity but
not on the size of the relation (Proposition~\ref{prop:floor}), so the share of exact evaluation that
\sys needs falls as tables grow.

\paragraph{Scope and outline.}
We study semantic filters over text; joins, ranking, and multimodal data are out of scope. Our work is orthogonal to methods that make each call cheaper, such as prefix
and KV-cache reuse~\cite{son2026kalypso,liu2024llmqueries}, and to optimizers that allocate error
budgets across the operators of one plan~\cite{sanmartino2026stretto}: \sys shrinks the set of
tuples on which a per-query method has to work. Sections~\ref{sec:background}--\ref{sec:system}
cover the problem, semantic views, \cf, maintenance, and the system;
Sections~\ref{sec:eval}--\ref{sec:related} cover the evaluation, discussion, and related work.

%% file: sections/background.tex
\section{Background and Problem Statement}
\label{sec:background}

\paragraph{Semantic filters.}
A semantic filter $\sigma_q(R)$ keeps the tuples of a relation $R$ for which a natural-language
predicate $q$ holds, as judged by an LLM. Systems such as \lotus~\cite{patel2025lotus} and
Palimpzest~\cite{liu2024palimpzest} expose $\sigma_q$ as an operator over a language expression
(``langex'') such as ``\texttt{\{review\}} complains about the battery'', and cloud warehouses expose
it as an AI function in SQL~\cite{liskowski2026cortex}. We write $M(t,q)\in\{0,1\}$ for the label
that the \emph{oracle} model $M$ assigns to tuple $t$ under the system's prompt for $q$. Each
evaluation of $M$ is an \emph{oracle call}.

\paragraph{Reference semantics and targets.}
Following \lotus, the exact answer is the reference plan
$\sigma^*_q(R)=\{t\in R : M(t,q)=1\}$, evaluated with one tuple per prompt at temperature zero.
An approximate plan returns a set $\hat\sigma_q(R)$, judged by its precision
$|\hat\sigma\cap\sigma^*|/|\hat\sigma|$ and its recall $|\hat\sigma\cap\sigma^*|/|\sigma^*|$
against the reference. The user supplies a precision target $\gamma_P$, a recall target
$\gamma_R$, and a failure probability $\delta$. A plan \emph{meets its targets} if both measures
reach them, and it is \emph{valid} if it meets its targets with probability at least $1-\delta$
over its own randomness, for every input. Per-query optimizers such as the \lotus cascade and
BARGAIN~\cite{zeighami2025bargain} aim at such targets by scoring every tuple with a cheap proxy
and sending only uncertain tuples to the oracle; only methods with finite-sample bounds, such as
BARGAIN, are valid in this sense.

\paragraph{Workloads.}
We model a \emph{workload} as a sequence of semantic filters $q_1,q_2,\dots$ over the same
relation and columns, each with its own targets, as in Example~\ref{ex:running}. The predicates
differ, so an exact-match cache keyed by predicate and
tuple~\cite{dorbani2025flockmtl,kumarasinghe2026ipdb} never hits. Yet they are often related: many tuples
that pass one predicate also pass another, and many that fail one fail another.

\begin{problem}[Certified cross-query reuse]
\label{prob:reuse}
\textbf{Input:} a relation $R$; the oracle labels paid for by the earlier filters
$q_1,\dots,q_{j-1}$ of a workload, organized as a catalog $\mathcal{C}$; a new filter $q_j$ with
targets $(\gamma_P,\gamma_R,\delta)$.
\textbf{Output:} a set $\hat\sigma_{q_j}(R)\subseteq R$.
\textbf{Constraint:} for every relation, every workload, and every catalog, the output meets both
targets with probability at least $1-\delta$.
\textbf{Objective:} minimize the expected number of new oracle calls for $q_j$.
\end{problem}

\noindent
Two features of the problem rule out the obvious solutions. First, validity must hold for every
catalog, including one in which every earlier result is useless or misleading for $q_j$; reusing a
past answer because its predicate looks similar is not valid (Section~\ref{sec:eval-validity}).
Second, on unrelated workloads the method should cost little more than the best per-query plan,
so that enabling reuse is never a gamble. The probability in the constraint is over the method's
own sampling for $q_j$, given the catalog. Table~\ref{tab:notation} summarizes our notation.

\begin{table}[t]
\caption{Notation.}
\label{tab:notation}
\small
\begin{tabular}{@{}ll@{}}
\toprule
Symbol & Meaning \\
\midrule
$R$, $t$ & relation, tuple \\
$q$, $p$ & new predicate, earlier predicate \\
$M(t,q)$ & oracle label of $t$ for $q$ \\
$\gamma_P,\gamma_R,\delta$ & precision and recall targets, failure probability \\
$V_p$, $\mathcal{C}$ & semantic view of $p$, catalog of views \\
$p\contained{\eps}q$ & $p$ is $\eps$-contained in $q$ (Definition~\ref{def:containment}) \\
$K$, $U$ & tuples with known / unknown oracle label for $q$ \\
$\hat\pi(t)$ & score: predicted probability that $M(t,q)=1$ \\
$X$, $O$, $A$ & reject, oracle, and accept regions of $U$ \\
$n_X$, $n_A$ & certification draws from $X$ and $A$ \\
$c_X$, $c_A$ & true numbers of positives in $X$ and $A$ \\
$u_X$, $\ell_A$ & upper bound on $c_X$, lower bound on $c_A$ \\
$E$ & event that both bounds hold \\
\bottomrule
\end{tabular}
\end{table}

%% file: sections/views.tex
\section{Semantic Views and Statistical Containment}
\label{sec:views}

\paragraph{Semantic views.}
A \emph{semantic view} $V_p$ is a label column over $R$ bound to a predicate $p$, the columns its
langex reads, the oracle model, and the prompt version. Each entry of $V_p$ is $1$, $0$, or
\emph{unknown}, so views may be partial. Each known entry carries its provenance: \emph{oracle}
if it was paid for with a call to $M(t,p)$, or \emph{derived} if it was decided without a call, by
accepting or rejecting a whole region under a certificate (Section~\ref{sec:reuse}). Every query
answered by \sys leaves its view behind at no extra cost, because its oracle labels and its
region decisions are exactly the view. SemWeave~\cite{mahmood2026semweave} already speaks of
treating a prior filter result as a materialized semantic view; we make the notion precise, add
provenance and partiality, and, above all, never assume that a view is correct for another
predicate.

\paragraph{Containment is statistical.}
Classical view-based query answering~\cite{halevy2001views,goldstein2001matviews} and semantic
caching~\cite{dar1996semantic} rely on logical containment: if $p$ implies $q$, every answer to
$p$ answers $q$. Semantic predicates do not behave this way. Whether ``says the battery overheats''
implies ``complains about the battery'' is a property of the oracle's labels on this relation,
not of the two sentences, and the oracle is not logically consistent across prompts.

\begin{definition}[$\eps$-containment]
\label{def:containment}
Predicate $p$ is $\eps$-contained in $q$ on $R$, written $p \contained{\eps} q$, if
\[ \Pr_{t \sim R}\left[ M(t,q)=0 \mid M(t,p)=1 \right] \le \eps. \]
\end{definition}

\noindent
Exclusion is the dual, $\Pr[M(t,q)=1 \mid M(t,p)=1]\le\eps$, and equivalence is containment in
both directions. Relations that an LLM judge proposes from the predicate text often fail as
statistical ones: \claimEOneJudgedImpliesBadShare{} of the implications the judge asserts on our
four predicate pools have $\eps>0.1$ on the oracle's labels (Section~\ref{sec:eval-relations}).

\paragraph{Containment does not chain.}
One might hope to infer relations along a lattice of predicates, as view-matching systems do for
logical containment. The following proposition shows that this is unsafe: the error of a chain
grows with the ratio of the predicates' selectivities.

\begin{proposition}[Chained containment]
\label{lem:chain}
If $p \contained{\eps_1} q$ and $q \contained{\eps_2} r$ on $R$, then $p \contained{\eps'} r$ with
\[ \eps' = \eps_1 + \eps_2 \cdot \frac{\sel(q)}{\sel(p)}, \]
where $\sel(x)=\Pr_{t\sim R}[M(t,x)=1]$. The bound is tight whenever
$\eps_2\,\sel(q) \le (1-\eps_1)\,\sel(p)$.
\end{proposition}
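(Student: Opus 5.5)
Write $P,Q,R'$ for the positive sets of $p,q,r$ in $R$ under $M$, with $t$ uniform on $R$. The plan is a union-bound decomposition of the bad event. We want to bound $\Pr[t\notin R' \mid t\in P]$. Split according to whether $t\in Q$:
\[
\Pr[t\in P\setminus R'] \le \Pr[t\in P\setminus Q] + \Pr[t\in (P\cap Q)\setminus R'] \le \Pr[t\in P\setminus Q] + \Pr[t\in Q\setminus R'].
\]
Each piece is then controlled by one hypothesis. The first hypothesis gives $\Pr[t\in P\setminus Q]\le \eps_1\,\sel(p)$, and the second gives $\Pr[t\in Q\setminus R']\le \eps_2\,\sel(q)$. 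Dividing by $\sel(p)>0$ yields $\eps'=\eps_1+\eps_2\,\sel(q)/\sel(p)$. We assume $\sel(p)>0$ so the conditional is defined; the case $\sel(q)=0$ is trivial. The upper bound is routine.

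\emph{Tightness.} We must exhibit, for given $\eps_1,\eps_2,\sel(p),\sel(q)$ satisfying the condition, a labeling attaining equality. For the inequalities to be equalities, three things must hold:
\begin{itemize}
\item the hypotheses hold with equality;
\item $P\cap Q\subseteq R'$, so no $q$-violations fall inside $P$;
\item all of $Q\setminus R'$ lies inside $P\setminus Q$.
\end{itemize}
The last requirement is impossible, since $Q\setminus R'$ lies inside $Q$. So the second inequality must be made tight differently: all of $Q\setminus R'$ should fall in $P^c$. The $P\setminus R'$ mass then counts only $P\setminus Q$ (mass $\eps_1\sel(p)$). This suggests my decomposition loses exactness, and I would first recheck whether tightness is meant as "there exist instances where $\Pr[r=0\mid p=1]$ reaches $\eps'$".

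My intended construction is a finite relation in which the $Q\setminus R'$ tuples, of mass $\eps_2\sel(q)$, sit inside $P$. The natural way is to make them part of $P\setminus Q$. That contradicts $Q\setminus R'\subseteq Q$, so instead I would place them in $P\cap Q$, where $r$ fails. This requires $\eps_2\sel(q)\le |P\cap Q|/|R| = (1-\eps_1)\sel(p)$, which is exactly the stated condition. This is the key clue: the construction places $q$'s violations inside $P\cap Q$ and $p$'s violations in $P\setminus Q$ disjointly, also with $r=0$ there. Then $P\setminus R'$ has mass $\eps_1\sel(p)+\eps_2\sel(q)$.

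So the correct proof bounds the bad set as the disjoint union $(P\setminus Q)\cup((P\cap Q)\setminus R')$. The first part is $\le\eps_1\sel(p)$, and the second is $\le|Q\setminus R'|\le\eps_2\sel(q)$. This is the same bound, just with the sets made explicit. In the tight instance both inclusions are equalities: every $p$-positive outside $Q$ fails $r$, and $Q\setminus R'\subseteq P$.

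\emph{Main obstacle.} The delicate step is realizability of the tight instance with rational masses on a finite $R$. I would allow $|R|$ large, choose counts with $|P|=\sel(p)|R|$, $|P\setminus Q|=\eps_1|P|$, and $|Q|=\sel(q)|R|$. The remaining $Q$ mass outside $P$ needs $\sel(q)\ge(1-\eps_1)\sel(p)$, which holds, or else $Q\subseteq P$ adjusts. I would also check that the labels $M(\cdot,r)$ on the rest are arbitrary while keeping $q\contained{\eps_2}r$ at equality. I would note that any $(\eps_1,\eps_2,\sel)$ values are met by scaling $|R|$ so that all counts are integers.
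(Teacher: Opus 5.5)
Your proof is correct and follows the paper's: the same decomposition $|P\setminus R'|\le|P\setminus Q|+|(P\cap Q)\setminus R'|\le\eps_1|P|+\eps_2|Q|$ for the bound, and the same extremal instance for tightness. In that instance the $\eps_1|P|$ tuples of $P\setminus Q$ all fail $r$, and all $\eps_2|Q|$ of $q$'s $r$-failures sit inside $P\cap Q$, which is exactly where $\eps_2\,\sel(q)\le(1-\eps_1)\,\sel(p)$ is used. In a write-up, drop your first list of equality conditions, where $P\cap Q\subseteq R'$ and placing $Q\setminus R'$ inside $P\setminus Q$ or outside $P$ are wrong, and keep the two you state at the end: every tuple of $P\setminus Q$ fails $r$, and $Q\setminus R'\subseteq P$.
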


\begin{proof}[Proof sketch]
Split the event $\{r=0\}$ on $q$:
\begin{align*}
\Pr[r{=}0 \mid p{=}1] &\le \Pr[q{=}0 \mid p{=}1] + \Pr[q{=}1, r{=}0, p{=}1]/\Pr[p{=}1]\\
&\le \eps_1 + \Pr[q{=}1,r{=}0]/\sel(p) = \eps_1 + \eps_2\,\sel(q)/\sel(p).
\end{align*}
For tightness, place the $\eps_1$ share of $p$'s positives outside $q$ and make them fail $r$, and
place all of $q$'s $r$-failures inside $p\wedge q$, which has room for them under the stated
condition. Appendix~\ref{app:chain} gives the full proof.
\end{proof}

\noindent
When $p$ is narrow and $q$ broad, $\sel(q)/\sel(p)$ is large, and a small error on the second link
becomes a large one. For instance, with $\sel(q)/\sel(p)=10$ and $\eps_1=\eps_2=0.02$, the chain
guarantees only $\eps'=0.22$. Relations between predicates must therefore be established for each
pair on the data, never inferred transitively. \sys goes one step further and never relies on any
relation for correctness: relations decide where to look, and fresh oracle samples decide what is
true.

%% file: sections/reuse.tex
\section{Certified Reuse}
\label{sec:reuse}

\sys answers a filter $\sigma_q$ with the operator \cf (Algorithm~\ref{alg:reuse}). The operator
obeys one rule: \emph{only fresh oracle labels, drawn uniformly at random from a region fixed
before the draws, enter a confidence bound}. Views, helper scores, and embeddings decide how the
relation is ordered and split; they never decide what is true. The operator runs in five steps:
it matches candidate views (Section~\ref{sec:matching}), fits a score on a shaping sample and
refines it where it is uncertain (Sections~\ref{sec:shaping} and~\ref{sec:refit}), plans three
regions (Section~\ref{sec:planning}), certifies them
(Section~\ref{sec:certification}), and closes any remaining gap
(Section~\ref{sec:gap}). Let $K$ be the tuples whose oracle label for $q$ is already known (from
an earlier run of $q$) and $U=R\setminus K$. Labels in $K$ are used exactly and never as samples.

\begin{algorithm}[t]
\caption{\cf$(q, R, \mathcal{C}, \gamma_P, \gamma_R, \delta)$}
\label{alg:reuse}
\small
\begin{algorithmic}[1]
\State $K \gets$ tuples with oracle labels for $q$ in $\mathcal{C}$; $U \gets R\setminus K$
\State $\mathcal{V} \gets$ the $k$ views most similar to $q$ \Comment{match}
\State label a uniform sample $S\subset U$; $K\gets K\cup S$; $U\gets U\setminus S$
\State fit $\hat\pi$ on $K$ from $\mathcal{V}$, helper scores, embeddings \Comment{shape}
\State sort $U$ by $\hat\pi$; pick $X,O,A,n_X,n_A,\alpha_X,\alpha_A$ (Problem~\ref{prob:plan}) \Comment{plan}
\For{$r=1,\dots,r_{\max}$} \Comment{refine}
  \State label the $b$ tuples of $O$ with $\hat\pi$ closest to $\tfrac12$; refit $\hat\pi$; re-plan
\EndFor
\State label $O$; draw $n_X$ from $X$, $n_A$ from $A$ without replacement \Comment{certify}
\State $u_X \gets U^{\mathrm{HG}}_{\alpha_X}$; $\ell_A \gets L^{\mathrm{HG}}_{\alpha_A}$
\While{a condition of Theorem~\ref{thm:valid} fails} \Comment{close the gap}
  \State label the next unknown tuples at the top of $X$ or the bottom of $A$
\EndWhile
\State register $V_q$: oracle labels as \emph{oracle}, $A$/$X$ decisions as \emph{derived}
\State \Return known positives $\cup$ ($A$ minus known negatives)
\end{algorithmic}
\end{algorithm}

\subsection{Matching candidate views}
\label{sec:matching}
A catalog may hold hundreds of views for a relation. Matching keeps the $k=10$ views whose
predicate text is most similar to that of $q$, by the cosine similarity of local sentence
embeddings, and passes them to the score model as features. Matching does not need to identify
the relation between $p$ and $q$: the score model weights each view by how well its labels predict
$q$'s labels on the shaping sample. This subsumes label co-occurrence and lets a view that is
negatively related to $q$ (an exclusion) be as useful as a containing one. Matching errors cost
calls, never correctness, because matching only shapes the order. \sys records no relation
judgments; an LLM judge is used in our evaluation only to drive the SemWeave baseline and to
measure how often judged relations fail.

\subsection{Shaping: views as features}
\label{sec:shaping}
\sys first labels a \emph{shaping sample} of $s_0$ tuples drawn uniformly from $U$ ($s_0=200$ by
default, at most 8\% of $U$). These labels are returned exactly and join $K$, but they never enter
a bound. On all labeled tuples, \sys fits an $L_2$-regularized logistic model $\hat\pi(t)$ whose
features cost no oracle calls:
\begin{itemize}
\item for each candidate view $V_p$, an indicator that $V_p$ labels $t$ positive and an
indicator that $V_p$ does not know $t$;
\item the helper model's score for $q$, when a cascade helper is configured, as in \lotus;
\item a 32-dimensional projection of a local sentence embedding of $t$.
\end{itemize}
Treating ``unknown'' as a feature value lets partial views contribute like complete ones. The view
indicators are what distinguishes \sys from a per-query proxy: past answers to related predicates
can be far more predictive of $q$ than a small helper model.

\subsection{Refining the score}
\label{sec:refit}
A score fitted on 200 uniform labels is sample-starved: at 5\% selectivity the shaping sample
holds about ten positives, too few to learn how much to trust twenty view features. \cf therefore
plans once, labels the $b$ tuples of the planned oracle region whose $\hat\pi$ is closest to
$\tfrac12$, refits, and plans again, for up to $r_{\max}=8$ rounds. The batch is $b=5\%$ of the
relation; a fixed batch of a few hundred boundary tuples would distort, on a large relation, the
probabilities that the planner reads as expected counts. Refinement labels tuples that the
current plan would label anyway, uses them exactly, and ends before the regions are fixed and any certification draw is
made, so validity is unaffected (Theorem~\ref{thm:valid}).

\subsection{Planning the regions}
\label{sec:planning}
Let $u_1,\dots,u_m$ be the tuples of $U$ in ascending order of $\hat\pi$. A \emph{plan}
$(i,j,n_X,n_A)$ with $i\le j$ puts the first $i$ tuples in the reject region $X$, the next $j-i$
in the oracle region $O$, and the rest in the accept region $A$; $n_X$ and $n_A$ are the numbers
of certification draws from $X$ and $A$. The planner predicts what certification would conclude by
plugging expected counts into the bounds of Section~\ref{sec:certification}. With
$\hat c_Y=\sum_{t\in Y}\hat\pi(t)$ for a region $Y$, and $\alpha$ the per-region level, it predicts
\begin{gather*}
\hat u_X = |X|\,U^{\mathrm{CP}}_\alpha\!\left(n_X\hat c_X/|X|,\; n_X\right),\qquad
\hat\ell_A = |A|\,L^{\mathrm{CP}}_\alpha\!\left(n_A\hat c_A/|A|,\; n_A\right),
\end{gather*}
where $U^{\mathrm{CP}}_\alpha(x,n)$ and $L^{\mathrm{CP}}_\alpha(x,n)$ are the one-sided
Clopper--Pearson bounds for $x$ successes in $n$ trials (with $x$ rounded). Let $k^+$ be the
known positives of $q$. The planner predicts the true positives returned, the positives missed
(positives drawn from $X$ are found and returned), and the size of the returned set as
\begin{gather*}
\hat T=k^+ + \hat c_O + \hat\ell_A,\qquad
\hat\mu=\max\bigl(\hat u_X - n_X\hat c_X/|X|,\,0\bigr),\\
\text{and}\quad k^+ + \hat c_O + |A|.
\end{gather*}

The planner is deliberately pessimistic: it predicts the number of positives among the draws as
its expectation shifted by half a standard deviation in the unfavorable direction, which trades a
few planned calls for fewer gap-closing rounds. It also chooses how to split $\delta$ between the
two regions, $\alpha_X=f\delta$ and $\alpha_A=(1-f)\delta$ with $f\in\{0.5,0.7,0.9\}$, because
certifying recall on a large reject region usually needs the larger share.

\begin{problem}[Region planning]
\label{prob:plan}
Minimize the predicted oracle calls $|O| + n_X + n_A$ over plans $(i,j,n_X,n_A)$, subject to
$\hat T \ge \gamma_R(\hat T+\hat\mu)$ and $\hat T \ge \gamma_P(k^+ + \hat c_O + |A|)$.
\end{problem}

\noindent
The planner searches boundaries $i,j$ on a grid of $g{+}1$ evenly spaced positions in $[0,m]$
($g=40$) and sample sizes on a doubling grid from 100 to 1{,}600, using only sizes below 70\% of
the region. Clopper--Pearson bounds are tabulated once per sample size, so each candidate plan is
evaluated in constant time and the search costs $O(g^2\cdot 5^2\cdot 3)$ evaluations (the last factor for the $\delta$ split), plus
$O(m\log m)$ for sorting. If no plan is predicted to cost fewer calls than $|U|$, everything is
$O$ and the operator degenerates to exact evaluation. The plan depends on the catalog, the
helper, the embeddings, and the shaping labels, but on no label drawn later.

\subsection{Certification}
\label{sec:certification}
\sys labels $O$ in full. It then draws $n_X$ tuples from $X$ and $n_A$ from $A$ uniformly
\emph{without replacement}, so the number of positives among the draws from a region is
hypergeometric with the region's (unknown, fixed) number of positives as its parameter. One-sided
exact hypergeometric bounds at levels $\alpha_X$ and $\alpha_A$ with $\alpha_X+\alpha_A=\delta$ (or
$\delta$ for the only non-empty region) are then valid for the finite region without any
approximation: $U^{\mathrm{HG}}$ is the largest count $c$ for which observing at most the drawn
positives has probability above $\alpha_X$, and $L^{\mathrm{HG}}$ is defined symmetrically. They
are markedly tighter than Clopper--Pearson bounds~\cite{clopper1934binomial} on the same draws when
the sample is a sizable share of the region. While planning, \sys approximates them by Clopper--Pearson
bounds with a finite-population correction. Let $u_X$ be the resulting upper
bound on the number of positives in $X$, $\ell_A$ the lower bound on the number of positives in
$A$, and $E$ the event that both bounds hold; by a union bound, $\Pr[E]\ge1-\delta$.

\sys returns every known positive and every tuple of $A$ except known negatives. To state when
this answer is certified, consider any state of the algorithm after certification and define:
\begin{itemize}
\item $P$: known positives outside $A$ (in $K$, in $O$, and those found in $X$);
\item $d_X$ and $\nu_X$: known positives and still-unknown tuples in $X$;
\item $d_A$ and $e_A$: known positives and known negatives in $A$;
\item $T = P + \max(\ell_A, d_A)$, a lower bound on the true positives returned;
\item $\mu = \min\bigl(\max(u_X - d_X, 0),\; \nu_X\bigr)$, an upper bound on the positives missed.
\end{itemize}

\begin{theorem}[Validity]
\label{thm:valid}
On the event $E$, if $T \ge \gamma_R(T + \mu)$ and $T \ge \gamma_P(P + |A| - e_A)$, the returned
set has recall at least $\gamma_R$ and precision at least $\gamma_P$. Hence \cf meets both
targets with probability at least $1-\delta$, for every relation, workload, and catalog.
\end{theorem}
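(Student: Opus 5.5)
The proof splits into a deterministic part and a probabilistic part.

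\emph{Deterministic part.} Fix any stopping state and assume the event $E$ holds, that is, $c_X\le u_X$ and $c_A\ge \ell_A$. Write $\hat\sigma$ for the returned set, which consists of every known positive together with $A$ minus its known negatives. Every tuple outside $X\cup A$ is known: $K$, the shaping and refinement labels, and $O$ after it is labeled in full.

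\emph{True positives returned.} These are the known positives outside $A$, together with all positives of $A$. The positives of $A$ number $c_A$, and the returned tuples of $A$ include every one of them, since only known negatives are removed. So $|\hat\sigma\cap\sigma^*| = P + c_A$. Because $c_A\ge d_A$ always and $c_A\ge\ell_A$ on $E$, we get $|\hat\sigma\cap\sigma^*|\ge T$.

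\emph{Positives missed.} These are exactly the still-unknown positives of $X$. Their number is $c_X-d_X$, which is at most $\max(u_X-d_X,0)$ on $E$. It is also at most $\nu_X$, since they lie among the unknown tuples of $X$. Hence the number missed is at most $\mu$.

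\emph{Recall.} The function $x\mapsto x/(x+y)$ is increasing in $x$ and decreasing in $y$. Therefore
\[
\mathrm{recall}=\frac{P+c_A}{P+c_A+(c_X-d_X)}\ge \frac{T}{T+\mu}\ge\gamma_R .
\]
\emph{Precision.} The returned set has size exactly $P+|A|-e_A$, a deterministic quantity. So precision equals $(P+c_A)/(P+|A|-e_A)\ge T/(P+|A|-e_A)\ge\gamma_P$. Degenerate cases need a convention: if $\sigma^*=\emptyset$ then recall is vacuous, and if $\hat\sigma=\emptyset$ the precision condition forces $T=0$.

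\emph{Probabilistic part: showing $\Pr[E]\ge 1-\delta$.} I expect this to be the main obstacle, because the regions, sample sizes and levels are chosen adaptively. The argument conditions on the sigma-field $\mathcal F$ generated by everything that happens before the certification draws. This includes the catalog, the shaping sample, all refinement rounds with their refits and re-plans, and the final plan $(X,O,A,n_X,n_A,\alpha_X,\alpha_A)$.

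Conditional on $\mathcal F$, three things are fixed: the counts $c_X$ and $c_A$ (the oracle labels are deterministic), and the sets $X$ and $A$. The draws are uniform without replacement and independent of $\mathcal F$, so the number of positives among the $n_X$ draws is exactly hypergeometric $(|X|,c_X,n_X)$. The exact hypergeometric upper bound at level $\alpha_X$ then gives $\Pr[c_X>u_X\mid\mathcal F]\le\alpha_X$, using the standard argument that the true parameter is excluded only when the observation falls in a tail of mass at most $\alpha_X$. Symmetrically, $\Pr[c_A<\ell_A\mid\mathcal F]\le\alpha_A$. A union bound gives $\Pr[E^c\mid\mathcal F]\le\delta$, and taking expectations over $\mathcal F$ removes the conditioning.

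\emph{Gap closing.} This step must not invalidate $E$. The plan is to define $E$ once, through $u_X$ and $\ell_A$ computed from the certification draws, and never recompute the bounds. Gap closing labels further tuples of $X$ and $A$ deterministically. This changes only the observed quantities $d_X,\nu_X,d_A,e_A,P$; it leaves the fixed true counts $c_X,c_A$ and the bounds untouched. Since the deterministic argument above holds for every state on $E$, it holds in particular for whichever state the loop stops in. Termination is guaranteed because labeling all of $X\cup A$ gives $\mu=0$ and exact counts. So one coverage event covers all stopping states, with no multiplicity penalty. Because nothing in the argument refers to the catalog beyond $\mathcal F$, the guarantee holds for every relation, workload and catalog.
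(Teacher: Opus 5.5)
Your proposal is correct and follows the paper's proof essentially step for step. You condition on the pre-certification information $\mathcal{F}$ so that the regions, sample sizes, and $\delta$ split are fixed; you get $\Pr[E]\ge 1-\delta$ from exact hypergeometric test inversion and a union bound; and you show deterministically, by monotonicity of the recall and precision ratios in $c_A$ and $c_X$, that every stopping state on $E$ meets both targets, with gap closing covered because $u_X$ and $\ell_A$ are never recomputed and with termination by full labeling of $X\cup A$.
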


\begin{proof}[Proof sketch]
Let $c_X$ and $c_A$ be the true numbers of positives in $X$ and $A$. The returned set contains
$P+c_A$ true positives (no positive of $A$ is dropped), has size $P+|A|-e_A$, and misses
$c_X-d_X\le\nu_X$ positives. Recall $(P+c_A)/(P+c_A+c_X-d_X)$ increases in $c_A$ and decreases in
$c_X$; precision $(P+c_A)/(P+|A|-e_A)$ increases in $c_A$. On $E$, $c_A\ge\ell_A$ and
$c_X\le u_X$, and trivially $c_A\ge d_A$, so $T$ and $\mu$ yield lower bounds on both ratios. The
argument holds simultaneously for every state in which the algorithm can stop, so data-dependent
stopping is covered. Views enter only through $\hat\pi$, which fixes $X$ and $A$ before the
draws. Appendix~\ref{app:valid} gives the full proof and its extension to any number of strata.
\end{proof}

\subsection{Closing the gap}
\label{sec:gap}
The planner works with predictions, so a condition of Theorem~\ref{thm:valid} may fail after
certification. \sys then labels the highest-scored unknown tuples of $X$ in batches of 50 (each
positive found is returned and lowers the bound on missed positives) or the lowest-scored unknown
tuples of $A$ (each negative found is dropped), and checks again. Labeling a region completely
makes its contribution exact, so the loop always terminates with both conditions satisfied. Every
intermediate state is covered by the same event $E$, so gap closing never affects validity; it
turns a near miss of the planner into a few extra calls instead of a failed query or a fall back
to exact evaluation.

\subsection{Discussion}
\label{sec:reuse-discussion}
\paragraph{Why not trust the views?}
A view $V_p$ with $p\contained{\eps}q$ suggests accepting $p$'s positives for $q$, and
$q\contained{\eps}p$ suggests rejecting $p$'s negatives. On real predicates $\eps$ is rarely small
(Table~\ref{tab:e1}), and errors compound along chains (Proposition~\ref{lem:chain}), so a system
that reuses on the strength of an inferred relation misses its targets
(Section~\ref{sec:eval-validity}). \cf uses the same information to place the regions and pays a
few hundred fresh labels to certify them.

\paragraph{A certification floor.}
Any sampling-based guarantee pays a per-query cost that depends on the predicate's selectivity but
not on the size of the relation, and on small relations this floor, not the quality of the score,
limits the savings.

\begin{proposition}[Certification floor]
\label{prop:floor}
Let $q$ have $c=sN$ positives among $N$ tuples, and let a plan reject $|X|=xN$ tuples. With
Clopper--Pearson bounds, even if no certification draw from $X$ is positive, recall $\gamma_R$ is
certified only if
\[
n_X \;\ge\; \frac{\ln(1/\alpha)}{-\ln\!\bigl(1-\rho\bigr)} \;\approx\; \kappa\,\frac{x}{s},
\qquad \rho=\frac{(1-\gamma_R)\,s}{\gamma_R\,x},\quad \kappa=\frac{\gamma_R\ln(1/\alpha)}{1-\gamma_R},
\]
unless all but at most $(1-\gamma_R)c/\gamma_R$ tuples of $X$ are labeled. The plan then costs at
least $|O|+n_X\ge(1-x-a)N+\kappa x/s$
calls, where $a=|A|/N$.
\end{proposition}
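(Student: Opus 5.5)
The plan is to read off the recall condition of Theorem~\ref{thm:valid} in the most favorable case for certification, and then turn the requirement on $u_X$ into a requirement on $n_X$ through the one-sided Clopper--Pearson bound with zero successes. Throughout, take the best case for the plan. Every tuple outside $X$ is labeled or accepted and all of them are positives, so $T$ can be at most $c$. No draw from $X$ is positive, so $d_X=0$. The recall condition $T\ge\gamma_R(T+\mu)$ then forces $\mu\le(1-\gamma_R)T/\gamma_R\le(1-\gamma_R)c/\gamma_R$.

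Since $\mu=\min(\max(u_X,0),\nu_X)$, there are two ways to meet this requirement. The first is $\nu_X\le(1-\gamma_R)c/\gamma_R$, meaning all but that many tuples of $X$ are labeled; this is the stated exception. Otherwise we need $u_X\le(1-\gamma_R)c/\gamma_R$. With Clopper--Pearson and $0$ successes in $n_X$ draws, the upper bound on the positive fraction is $1-\alpha^{1/n_X}$, so $u_X=|X|(1-\alpha^{1/n_X})$. The requirement becomes
\[1-\alpha^{1/n_X}\le\frac{(1-\gamma_R)sN}{\gamma_R xN}=\rho.\]
Equivalently, $\alpha^{1/n_X}\ge1-\rho$. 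Taking logarithms, which reverses the inequality since both sides lie below one, gives
\[n_X\ge\frac{\ln(1/\alpha)}{-\ln(1-\rho)}.\]
When $\rho\ge1$ the bound is vacuous, and the proposition should be read for $\rho<1$. For the approximation, use $-\ln(1-\rho)\approx\rho$ for small $\rho$. This gives $n_X\approx\ln(1/\alpha)\,\gamma_R x/((1-\gamma_R)s)=\kappa x/s$. Because $-\ln(1-\rho)\ge\rho$, this approximation slightly overstates the exact requirement; the exact form is the rigorous one.

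For the cost statement, every tuple of $O$ is labeled, and $|O|=N-|X|-|A|=(1-x-a)N$. I will assume here that the shaping and known labels are counted within $O$ or $K$, so that $U$ and $N$ are identified. Adding the draws gives $|O|+n_X\ge(1-x-a)N+\kappa x/s$, up to the approximation.

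The main obstacle is being careful about which best case is legitimate. I must check that $T\le c$ is the right cap even with the $\max(\ell_A,d_A)$ term, since $\ell_A\le c_A$ holds only on $E$. To avoid this, I would argue from the condition as a deterministic requirement on the computed quantities, and use only that $T\le P+|A|$ and that $P+|A|$ is at most $N$. If this becomes awkward, I would instead state the result with $T$ replaced by its maximal value $c$ under the assumption that the certificate is sound. The $\max(\cdot,0)$ and the identification of hypergeometric versus Clopper--Pearson bounds are routine; the proposition explicitly specifies Clopper--Pearson.
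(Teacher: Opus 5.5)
Your proposal is correct and follows the paper's own argument. With no positive draw, $d_X=0$ and $u_X=|X|(1-\alpha^{1/n_X})$, and the recall condition with $T\le c$ forces $u_X\le(1-\gamma_R)c/\gamma_R$ unless $\nu_X$ is that small; solving $1-\alpha^{1/n_X}\le\rho$ then gives the bound. Of your two ways to justify $T\le c$, keep the second: on $E$ you have $P\le c-c_A$ and $\max(\ell_A,d_A)\le c_A$, whereas the deterministic cap $T\le P+|A|\le N$ effectively replaces $s$ by $1$ and destroys the floor. Your remark that $\kappa x/s$ slightly overstates the exact requirement, so the final cost inequality holds only up to that approximation, is correct and sharper than the paper's statement.
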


\begin{proof}[Proof sketch]
Hypergeometric bounds lower the floor noticeably only when $n_X$ is a sizable share of $|X|$.
With no positive among $n_X$ draws, the Clopper--Pearson bound is
$u_X=|X|(1-\alpha^{1/n_X})$ and
$d_X=0$. Theorem~\ref{thm:valid} needs $\mu\le(1-\gamma_R)T/\gamma_R$, and $T\le c$, so
$|X|(1-\alpha^{1/n_X})\le(1-\gamma_R)c/\gamma_R$; solving for $n_X$ gives the bound. Positive
draws only raise $u_X$.
\end{proof}

\noindent
For $\gamma_R=0.9$ and $\alpha=\delta/2=0.025$, $\kappa\approx\claimFloorCoef{}$: a predicate with
1\% selectivity needs about \claimFloorDrawsAtOnePct{} draws whenever most tuples are rejected,
which is two thirds of a 5{,}000-tuple relation but a rounding error on a relation of millions. The
bound holds for every score, including a perfect one. Any method that certifies recall from a
uniform sample of the rejected tuples faces a floor of the same order, so per-query cascades with
guarantees pay it as well.

\paragraph{How \cf differs from a per-query cascade.}
Per-query cascades~\cite{patel2025lotus,kang2020supg,zeighami2025bargain} threshold one scalar
proxy and estimate the quality of the threshold from samples of the whole relation. \cf differs
in three ways. Its score combines free past labels of other predicates with the usual proxies.
Every label it already holds, from an earlier run of $q$, the shaping sample, the oracle region,
or gap closing, enters the precision and recall accounting exactly rather than through an
estimator. And it certifies two regions with exact bounds chosen by a planner that trades the
size of the oracle region against the number of certification draws. With views disabled, \cf is
a finite-sample stratified cascade; we report that configuration separately so that the effect of
views and the effect of the certifier can be told apart (Section~\ref{sec:eval-cost}).

%% file: sections/maintenance.tex
\section{Maintaining Semantic Views}
\label{sec:maintenance}

Views are a by-product of answering queries, so maintaining them must be cheap and must never put
the guarantee at risk. \sys meets both requirements with one principle: a view is never assumed
correct, so every change to the data or to the oracle is handled by re-certifying, not by
recomputing.

\paragraph{Registering a view.}
When \cf finishes, it registers the view of $q$: every oracle label with provenance \emph{oracle},
and the unlabeled tuples of $A$ and $X$ as derived positives and negatives. If a view of $q$
already exists, the two are merged tuple by tuple: an oracle label always wins, and a derived
label only fills an entry that has no oracle label. Only oracle labels of $q$ itself are used
exactly when $q$ is asked again; derived labels, like views of other predicates, only shape the
score. A repeated predicate is therefore answered with the labels it already paid for, which
subsumes an exact-match cache without trusting any derived decision.

\paragraph{Appends.}
Rows are identified by a content hash of the columns that the predicate reads
(Section~\ref{sec:system}). A new row enters every view as \emph{unknown}, which is already a
feature value of the score model. The view remains usable, and the score model learns from the
shaping sample how to order new rows. No view is invalidated and no label is recomputed. We
measure this case in Section~\ref{sec:eval-robust}.

\paragraph{Oracle upgrades.}
A view's identity includes the oracle model and the prompt version. After an upgrade, queries
under the new oracle find no oracle labels of their own, so nothing from the old oracle is used
exactly. Views recorded under the old oracle remain usable as features, because
Theorem~\ref{thm:valid} never assumes that features are correct. Section~\ref{sec:eval-robust}
shows that such views still save calls.

\paragraph{Storage and lifecycle.}
In memory, a view costs two bytes per tuple: one for the label and one for the provenance. On disk,
the catalog stores only known entries. All our experiments keep every view. Two lifecycle
questions remain open: which partial views to complete during idle time, and which views to drop
under a storage budget. Both are selection problems over the workload history, in the spirit of
classical view selection~\cite{harinarayan1996cubes}; we leave a budgeted advisor to future work.

%% file: sections/system.tex
\section{System}
\label{sec:system}

\sys is a Python package with an engine-independent core and a thin adapter for
\lotus~\cite{patel2025lotus,lotus2026software} (Figure~\ref{fig:arch}). The core contains the
catalog, view matching, the certification bounds, and \cf. It does not import \lotus, and a test
enforces this separation, so adapters for other engines need only supply an oracle, optional
helper scores, and row identities.

\begin{figure}[t]
\centering
\begin{tikzpicture}[font=\scriptsize,
  box/.style={draw,rounded corners=1.5pt,align=center,inner sep=2pt,minimum height=0.62cm},
  eng/.style={box,fill=black!6}, opstep/.style={box,fill=white,minimum width=1.55cm},
  arr/.style={-{Latex[length=1.4mm]}}]
  \node[eng,minimum width=2.4cm] (prog) at (0,0) {\lotus program\\(\texttt{LazyFrame})};
  \node[eng,minimum width=2.4cm] (opt) at (2.8,0) {\texttt{ViewOptimizer}};
  \node[eng,minimum width=2.4cm] (node) at (5.6,0) {\texttt{ViewFilterNode}};
  \draw[arr] (prog) -- (opt);
  \draw[arr] (opt) -- (node);
  \node[opstep] (m) at (0.05,-1.55) {match};
  \node[opstep] (sh) at (1.9,-1.55) {shape, refine};
  \node[opstep] (pl) at (3.75,-1.55) {plan};
  \node[opstep] (ce) at (5.6,-1.55) {certify, close};
  \draw[arr] (m) -- (sh); \draw[arr] (sh) -- (pl); \draw[arr] (pl) -- (ce);
  \begin{scope}[on background layer]
  \node[draw,rounded corners=2pt,fill=blue!8,fit=(m)(ce),
        inner sep=4pt,inner ysep=7pt,yshift=2pt,label={[anchor=north,font=\scriptsize\bfseries,yshift=-0.5pt]north:\cf}] (cf) {};
  \end{scope}
  \draw[arr] (node) -- (node |- cf.north);
  \node[box,fill=green!8,minimum width=2.4cm] (cat) at (0.4,-3.1) {view catalog (DuckDB)\\labels + provenance};
  \node[eng,minimum width=2.1cm] (hlp) at (3.2,-3.1) {helper LM\\(log-prob scores)};
  \node[eng,minimum width=2.1cm] (orc) at (5.6,-3.1) {oracle LM\\(\texttt{sem\_filter} prompt)};
  \draw[arr,<->] (cat.north) -- (cat.north |- cf.south) node[pos=0.5,right,font=\tiny]{views};
  \draw[arr] (hlp.north) -- (hlp.north |- cf.south);
  \draw[arr,<->] (orc.north) -- (orc.north |- cf.south) node[pos=0.5,right,font=\tiny]{labels};
\end{tikzpicture}
\caption{Architecture. The \lotus adapter (gray) rewrites semantic filters into nodes that run
\cf (blue) against a persistent view catalog. Oracle calls use \lotus's own prompt and model, so
answers keep \lotus's reference semantics.}
\label{fig:arch}
\end{figure}

\paragraph{\lotus integration.}
The adapter is a \lotus \texttt{LazyFrame} optimizer. It rewrites each \texttt{SemFilterNode}
into a \texttt{ViewFilterNode} that keeps every field of the original node and is bound to a
catalog; every other node, prompt, and model is left unchanged, and no \lotus source is modified.
Users opt in by adding one optimizer to a program:
\begin{lstlisting}
catalog = Catalog("reviews", 0, path="reviews.views.duckdb")
args = CascadeArgs(recall_target=0.9, precision_target=0.9,
                   failure_probability=0.05)
lf = LazyFrame().sem_filter("{text} says the battery overheats",
                            cascade_args=args)
out = lf.optimize([ViewOptimizer(catalog)]).execute(df)
\end{lstlisting}
The targets are read from \lotus's own \texttt{CascadeArgs}. The oracle is \lotus's own
\texttt{sem\_filter} prompt on the configured model, and helper scores are \lotus's own
log-probability scores from its helper model, so \sys answers with respect to exactly the
reference semantics that \lotus defines. Filters with options that \sys does not support
(explanations, raw outputs, few-shot examples, or a non-default reasoning strategy) run through
the native \lotus path.

\paragraph{Identities and catalog.}
A row's identity is a content hash of the columns that the langex reads, so views survive across
DataFrames and sessions, and rows of a new DataFrame that the catalog has not seen are appended to
every view as unknown. A view's identity is a hash of the langex, the prompt version, and the
oracle model. The catalog is a DuckDB file beside the data with four tables: the table's
metadata, the row identities, one row per view (predicate text, model, and the targets under
which it was produced), and the known labels with their provenance, stored sparsely.

\paragraph{Cost of the operator itself.}
Besides oracle calls, \cf embeds the tuples locally and, per refinement round, fits a small
logistic model and runs the planner's grid search: a median of
\claimEThreeSemviewsMsPerQuery{}\,ms per query in our single-threaded implementation.

%% file: sections/evaluation.tex
\section{Evaluation}
\label{sec:eval}

We answer six questions, one per subsection. (1) Do real workloads contain the relations that
reuse needs, and can relations judged by an LLM be trusted? (2) How many oracle calls does \sys
save, and where do the savings come from? (3) Do the guarantees hold, and do uncertified
alternatives keep them? (4) Is reuse safe and profitable when a question returns in other words?
(5) How robust is \sys to unrelated workloads, adversarial or partial views, appends, and oracle
changes? (6) How sensitive is it to its parameters and components?

\subsection{Experimental Setup}
\label{sec:eval-setup}

\paragraph{Datasets and workloads.}
Table~\ref{tab:datasets} summarizes the four text relations, each a sample of 5{,}000 tuples drawn
with a fixed seed. Measuring validity requires the oracle's label for every tuple and predicate, so
we limit the relations' size to limit the cost and carbon footprint of the experiments; larger-scale
experiments (Appendix~\ref{app:scale}) confirm that the findings hold on larger relations. \emph{PubMed} pairs biomedical abstracts with the 25 human-written predicates
that ScaleDoc~\cite{zhang2025scaledoc} released for this corpus, used verbatim; it is our external
workload. \emph{GoEmotions} and \emph{DBpedia} (social-media comments and encyclopedia entries)
each have 40 predicates generated mechanically from the datasets' published label taxonomies
(emotions, Ekman groups, and sentiments; three levels of DBpedia classes), so their logical
containments are known by construction. \emph{Reviews} has 40 analyst-style predicates over
electronics product reviews (battery, defects, shipping, service, sentiment, \dots), written
before any label was seen. A workload issues all predicates of a pool in a random order. For each
pool we also build a control workload of eight nearly uncorrelated predicates. Finally, a
\emph{reworded} workload models users and agents that ask the same question again in other words:
for \claimWrTwins{} predicates drawn at random from each of the Reviews and GoEmotions pools, gpt-oss-120b
wrote one rewording that keeps the meaning, which the oracle labeled like every other
predicate; the workload issues each pool together with its rewordings in random order.

\begin{table}[t]
\caption{Datasets and predicate pools. Selectivity is the median share of tuples that the oracle
labels positive, with the range over predicates in parentheses.}
\label{tab:datasets}
\small
\begin{adjustbox}{max width=\columnwidth}\input{tables/datasets}\end{adjustbox}
\end{table}

\begin{figure*}[t]
\centering
\includegraphics[width=\textwidth]{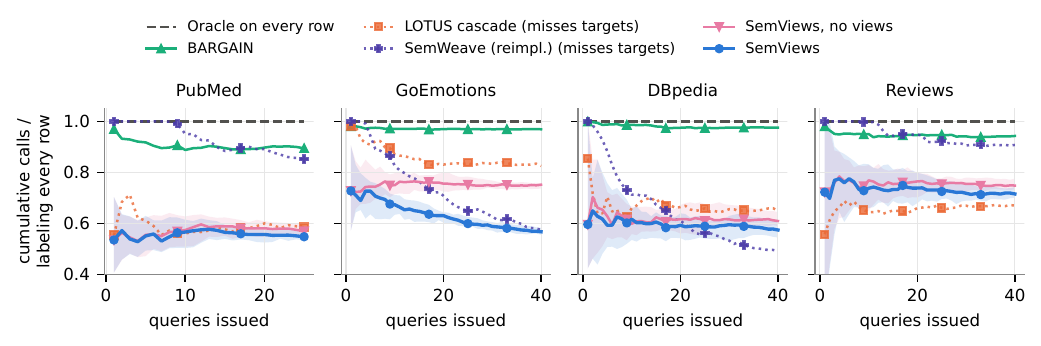}
\caption{Cumulative oracle calls as a share of exact evaluation as each workload unfolds (median
over workload orders; bands: 10th to 90th percentile). Dotted methods miss their targets on many
queries (Table~\ref{tab:e3}); \sys is the cheapest method that meets them.}
\label{fig:e2}
\end{figure*}

\paragraph{Oracle, helper, and recorded labels.}
The oracle is Llama~3.3~70B Instruct at temperature zero, prompted with \lotus's own
\texttt{sem\_filter} prompt, one tuple per prompt. The helper is Llama~3.2~3B Instruct (4-bit)
served locally; its score is \lotus's own log-probability score. We recorded every
(tuple, predicate) oracle label and helper score once, \claimTapeOracleCalls{} oracle calls in all,
and replay every method against this record. Oracle-call counts are therefore exact, and validity
can be checked over hundreds of workload orders. Re-labeling 2{,}000 random pairs twice reproduced
the recorded label in \claimNoiseAgreement{} of cases. Because the oracle is not perfectly
deterministic, the reference answer, and hence every precision, recall, and validity number, is
defined with respect to the recorded labels, as in \lotus and Larch~\cite{zhao2026larch}. A second
oracle, gpt-oss-120b, labels the first 1{,}000 GoEmotions tuples for the oracle-change experiment.
Row and predicate embeddings come from a local sentence encoder.

\paragraph{Methods.}
\emph{Exact evaluation} labels every tuple. The per-query methods with targets are:
\begin{itemize}
\item \emph{\lotus cascade}: \lotus~1.2.4's own importance sampling, calibration, and threshold
code on the recorded helper scores, with its default \texttt{CascadeArgs} (a 10\% importance
sample drawn from the first 200 tuples, 50 calibration quantiles); and a \emph{full-range}
variant whose importance sample spans the whole relation;
\item \emph{BARGAIN}~\cite{zeighami2025bargain}: its released code, with joint precision and
recall targets, on the helper score;
\item \emph{embedding proxy}: logistic regression on row embeddings trained on a 5\% labeled
sample and thresholded with the \lotus cascade, following~\cite{chung2026proxy};
\item \emph{\sys, no views}: \cf with only helper and embedding features.
\end{itemize}
The cross-query methods are:
\begin{itemize}
\item \emph{exact cache}: no predicate repeats, so it equals its \lotus-cascade fallback and is
omitted from the tables;
\item \emph{similarity cache}: reuses the labels of the most similar past predicate when the cosine
similarity of the predicate texts exceeds $\tau\in\{0.7,0.8,0.9\}$, and otherwise runs the \lotus
cascade; a \emph{similarity proxy} instead feeds those labels to the \lotus cascade as proxy
scores;
\item \emph{SemWeave}~\cite{mahmood2026semweave}, reimplemented from its Algorithm~1: for each new
filter it retrieves the five most similar past filters, obtains their bidirectional entailment
relation from an LLM judge (one prompt per pair, with SemWeave's consistency check), and decides a
tuple without the oracle whenever a stored result and the relation imply its label;
\item \emph{\sys}: \cf with views.
\end{itemize}
SemWeave has no precision or recall targets; we report how often its answers would meet ours.
Unless stated otherwise, $\gamma_P=\gamma_R=0.9$ and $\delta=0.05$.

\paragraph{Metrics and protocol.}
We report oracle calls and the share of queries that miss either target. Costs are medians over
20 workload orders. Validity is measured over \claimEThreeSemviewsSeeds{} orders for \sys
(\claimEThreeSemviewsQueries{} queries) and 200 orders for the baselines. We count calls rather
than dollars because the two agree: every recorded call carries its token counts, and for every
method and dataset, the share of exact evaluation's dollar cost differs from the share of its
calls by at most \claimETwoUsdCallGap{} percentage points.

\subsection{Relations Among Real Predicates}
\label{sec:eval-relations}

\paragraph{Containment is real but rarely tight, and judged implications usually fail.}
Table~\ref{tab:e1} measures the containment error $\eps=\Pr[q{=}0\mid p{=}1]$ on the oracle's
labels for every ordered pair of predicates over the same relation. Across the four pools, only
\claimEOneContainZeroOnePairs{} of \claimEOnePairs{} ordered pairs are $0.1$-contained, while
near-exclusion is common. The LLM judge called \claimEOneJudgedImplies{} pairs implications, and
\claimEOneJudgedImpliesBadShare{} of those have $\eps>0.1$. Implications derived from a taxonomy
fare no better, because the oracle is not logically consistent across prompts. This is the
failure that uncertified reuse inherits.

\begin{table}[t]
\caption{Containment and exclusion among predicates on the oracle's labels: ordered pairs that are
$\eps$-contained ($\sqsubseteq_\eps$) or $\eps$-exclusive (Excl.$_\eps$), and implications asserted
by an LLM judge, with the share that fail statistically ($\eps>0.1$).}
\label{tab:e1}
\small
\begin{adjustbox}{max width=\columnwidth}\input{tables/e1_relations}\end{adjustbox}
\end{table}

\subsection{Oracle Calls}
\label{sec:eval-cost}

\paragraph{\sys is the cheapest valid method on every dataset.}
Figure~\ref{fig:e2} plots cumulative oracle calls as each workload unfolds. \sys saves
\claimETwoSemviewsSavingMin{}--\claimETwoSemviewsSavingMax{} of the calls of exact evaluation.
BARGAIN with the helper score, the only prior per-query method that meets its targets on every
dataset, saves \claimETwoBargainSavingMin{}--\claimETwoBargainSavingMax{}: \sys saves
\claimETwoSavingRatioMin{}--\claimETwoSavingRatioMax{} as much, and BARGAIN needs
\claimETwoVsBargainMin{}--\claimETwoVsBargainMax{} the calls of \sys. Against the cheapest
per-query method that meets its targets on at least 95\% of queries (in every case \sys without
views), the factor is \claimETwoRatioValidMin{}--\claimETwoRatioValidMax{}. The \lotus cascade and
SemWeave appear cheaper on some datasets, but they do not meet their targets
(Section~\ref{sec:eval-validity}).

\paragraph{Later queries.}
The cumulative totals include the first queries of each workload, which find an empty catalog.
Over the last ten queries, \sys needs as little as \claimETwoLateShareMin{} of the calls of exact
evaluation, and BARGAIN needs \claimETwoLateVsBargainMin{}--\claimETwoLateVsBargainMax{} the calls
of \sys. Views remove up to \claimETwoLateViewGainMax{} of the calls of the operator without them
on GoEmotions, the pool richest in related predicates; on the other pools, their late gain stays
close to their workload average (Table~\ref{tab:views}).

\begin{table}[t]
\caption{Calls saved by views, relative to \sys without views on the same workload order.
\emph{Related preds}: predicates whose oracle labels correlate with another predicate of the pool
($|\phi|>0.6$). \emph{Related} and \emph{unrelated}: queries for which an earlier query of the
order did or did not ask such a predicate. \emph{All}: the whole workload; \emph{last 10}: its last
ten queries.}
\label{tab:views}
\small
\begin{adjustbox}{max width=\columnwidth}\input{tables/views}\end{adjustbox}
\end{table}

\paragraph{Views help when the workload holds related predicates.}
Against the same operator with views disabled, views remove
\claimETwoViewGainMin{}--\claimETwoViewGainMax{} of the calls over a whole workload, but this
average mixes two kinds of queries (Table~\ref{tab:views}). When an earlier query asked a strongly
correlated predicate, views remove \claimViewsRelatedGainMin{}--\claimViewsRelatedGainMax{} of
the calls; otherwise they remove at most
\claimViewsUnrelatedGainMax{}, and at most \claimViewsUnrelatedGainOtherMax{} outside GoEmotions,
where overlapping emotions also help weakly correlated predicates. The workload
average therefore follows how often a pool contains related predicates:
\claimViewsPartnersGoemotions{} of 40 GoEmotions predicates have a strongly correlated partner, and
at most \claimViewsPartnersOtherMax{} in each other pool. On DBpedia, the embedding alone already
separates most topical classes, so even related views add less. Refinement matters here: the
first-draft operator, which fits its score on the 200-tuple shaping sample alone, gained only
\claimViewsVOneRelatedGainMin{}--\claimViewsVOneRelatedGainMax{} from related views, because so few
labels cannot teach the score how much to trust twenty view features.

\paragraph{Certification, features, and refinement contribute separately.}
To isolate the certifier, we give BARGAIN the same shaping sample and the same score that \sys
fits on it before refinement. BARGAIN's threshold search then needs
\claimETwoVsBargainViewsVOneMin{}--\claimETwoVsBargainViewsVOneMax{} the calls of region certification
on that score (the first-draft operator of Table~\ref{tab:ablation}), and refining the score removes
another \claimETwoRefineCutMin{}--\claimETwoRefineCutMax{} of the calls.

\begin{figure}[t]
\centering
\includegraphics[width=\columnwidth]{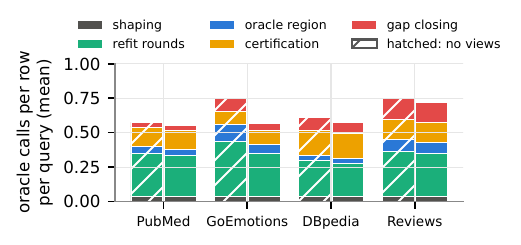}
\caption{Where the oracle calls go: mean calls per tuple and query, split by the step of \cf that
spends them, without views (hatched) and with views. Refinement and the oracle region label the
tuples the score cannot settle; views shrink both.}
\label{fig:breakdown}
\end{figure}

\paragraph{Most calls label tuples the score cannot settle.}
Figure~\ref{fig:breakdown} splits the calls of \cf by step. Refinement rounds and the oracle region,
which both label the tuples the score is unsure about, account for
\claimETwoBreakdownRefitRoundsMin{}--\claimETwoBreakdownRefitRoundsMax{} and
\claimETwoBreakdownOracleRegionMin{}--\claimETwoBreakdownOracleRegionMax{} of the calls, certification
draws for \claimETwoBreakdownCertificationMin{}--\claimETwoBreakdownCertificationMax{}, gap closing for
\claimETwoBreakdownGapClosingMin{}--\claimETwoBreakdownGapClosingMax{}, and the shaping sample for
\claimETwoBreakdownShapingMin{}--\claimETwoBreakdownShapingMax{}. Views reduce cost by sharpening the
score, so that fewer tuples need an exact label before the planner can place the rest in the
certified regions. The cost of certification itself is nearly constant, so its share
falls as relations grow (Section~\ref{sec:eval-sensitivity}).

\paragraph{Rare predicates sit at the certification floor.}
Cost depends strongly on selectivity. Predicates with at most 2\% selectivity cost
\claimSelLowShare{} of exact evaluation, those above 10\% cost \claimSelHighShare{}. This is the
floor of Proposition~\ref{prop:floor}: on 5{,}000 tuples, certifying recall for a rare predicate
needs a sample comparable to the relation itself, whatever the score. Of the
\claimSelPreds{} predicates, \claimSelLowPreds{} have at most 5\% selectivity, so the floor weighs on
every average we report, most on DBpedia and Reviews, whose median selectivities are lowest
(Table~\ref{tab:datasets}).

\subsection{Validity}
\label{sec:eval-validity}

\begin{table}[t]
\caption{Share of queries that miss a target ($\gamma_P=\gamma_R=0.9$, $\delta=0.05$). Bold:
above $\delta$. Only \sys, its view-free variant, and BARGAIN are valid on every dataset.}
\label{tab:e3}
\small
\begin{adjustbox}{max width=\columnwidth}\input{tables/e3_validity}\end{adjustbox}
\end{table}

\paragraph{\sys meets its guarantee on every dataset and every predicate.}
Table~\ref{tab:e3} reports the share of queries that miss either target. \sys misses on at most
\claimEThreeSemviewsMaxMiss{} of queries on any dataset, below $\delta=\claimEThreeDelta{}$, as
Theorem~\ref{thm:valid} requires. Because the guarantee is per query, we also check each predicate
separately: across \claimEThreeSemviewsPredicates{} predicates, the worst per-predicate miss rate
over its \claimEThreeSemviewsSeeds{} runs is \claimEThreeSemviewsWorstPredicateMiss{} (95\% upper
confidence bound \claimEThreeSemviewsWorstPredicateUpper{}), and
\claimEThreeSemviewsPredicatesAboveDelta{} predicates exceed $\delta$. BARGAIN is valid as well
(at most \claimEThreeBargainMaxMiss{}).

\paragraph{The \lotus cascade misses often, even with stricter internal targets.}
The \lotus cascade misses on \claimEThreeLotusMinMiss{}--\claimEThreeLotusMaxMiss{} of queries,
and \claimEThreeLotusPredicatesAboveDelta{} of \claimEThreeLotusPredicates{} predicates exceed
$\delta$. Its thresholds come from normal approximations on an importance sample drawn from the
first 200 tuples; the full-range variant still misses on up to
\claimEThreeLotusCascadeFullMaxMiss{}. Raising its targets by 0.08 and setting $\delta=0.001$ does
not fix this: it still misses on
\claimEOneTwoStrictLotusMissMin{}--\claimEOneTwoStrictLotusMissMax{} of queries, is valid on only
\claimEOneTwoStrictLotusValidDatasets{} of four datasets, and costs nearly as much as exact
evaluation.

\paragraph{Uncertified reuse is not safe.}
Similarity caches miss on \claimEThreeSimcacheMinMiss{}--\claimEThreeSimcacheMaxMiss{} of
queries. Part of these misses comes from their fallback, the \lotus cascade, but when the cache
actually hits, the answer misses a target \claimEThreeSimcacheMissGivenHit{} of the time. SemWeave
misses on \claimEThreeSemweaveMinMiss{}--\claimEThreeSemweaveMaxMiss{} of queries at
\claimETwoSemweaveShareMin{}--\claimETwoSemweaveShareMax{} of the cost of exact evaluation. It
reuses stored results on \claimEThreeSemweaveReuseShare{} of queries and misses on
\claimEThreeSemweaveMissGivenReuse{} of those, with a 5th-percentile recall of
\claimEThreeSemweaveRecallPFive{}: entailment judged from the predicate text is not entailment on
the oracle's labels. The embedding proxy, which trusts a learned classifier without certifying
it, misses on up to \claimEThreeProxyLrMaxMiss{}.

\begin{figure}[t]
\centering
\includegraphics[width=\columnwidth]{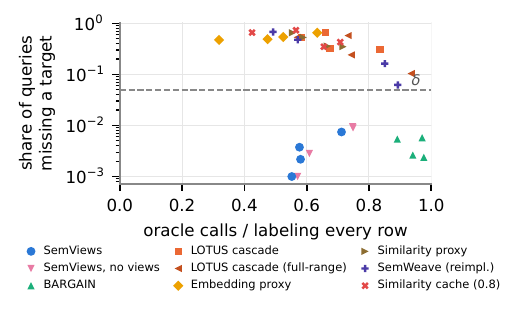}
\caption{Cost against validity. Each point is one method on one dataset. Only methods below the
dashed line meet the guarantee; among them, \sys is the cheapest.}
\label{fig:tradeoff}
\end{figure}

\paragraph{Cost and validity together.}
Figure~\ref{fig:tradeoff} combines both measures. On every dataset, each method that is cheaper
than \sys misses its targets on at least \claimEThreeCheaperThanSemviewsMinMiss{} of queries, and
each method that meets the guarantee costs more than \sys.

\begin{figure*}[t]
\centering
\includegraphics[width=\textwidth]{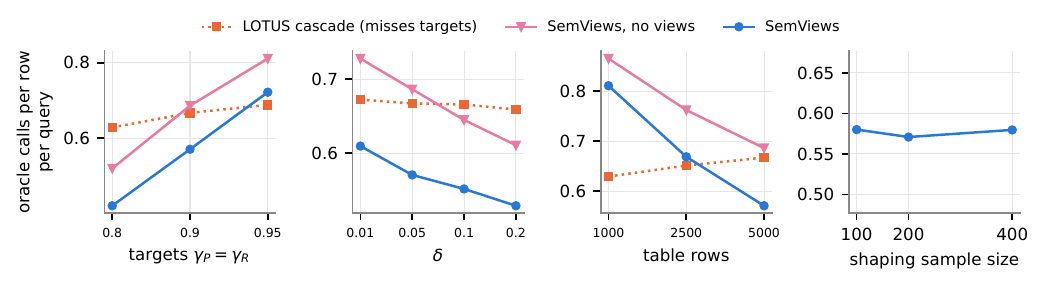}
\caption{Oracle calls per tuple and query (median over all four datasets and workload orders) as
the targets, $\delta$, the table size, and the shaping-sample size vary one at a time. Larger
tables help because the certification floor does not grow with the table.}
\label{fig:e4}
\end{figure*}

\subsection{Repeated Questions in Different Words}
\label{sec:eval-reworded}

\begin{table}[t]
\caption{Reworded workload: oracle calls (share of exact evaluation) and share of queries that miss
a target, for queries whose twin, the same question in other words, was asked earlier (calls: \claimWrOrdersCost{} workload orders; misses: \claimWrOrdersValid{}). Bold: above
$\delta$.}
\label{tab:reworded}
\small
\begin{adjustbox}{max width=\columnwidth}\input{tables/rephrased}\end{adjustbox}
\end{table}

\paragraph{The oracle does not answer a rewording the same way.}
A reworded predicate agrees with its original on \claimWrAgreeMin{}--\claimWrAgreeMax{} of the tuples,
but because most tuples are negatives, the overlap of the two positive sets (Jaccard) is only
\claimWrJaccardMin{}--\claimWrJaccardMax{}. Reusing the twin's answer therefore breaks a 90\% precision
or recall target more often than not: SemWeave, whose judge correctly calls almost every twin
equivalent, and the similarity caches miss their targets on
\claimWrUnsafeMissMin{}--\claimWrUnsafeMissMax{} of these queries (Table~\ref{tab:reworded}).

\paragraph{Certified reuse turns the twin into savings and keeps the guarantee.}
\sys uses the twin's view only to order the tuples. It needs
\claimWrSemviewsShareMin{}--\claimWrSemviewsShareMax{} of the calls of exact evaluation on repeated
questions, \claimWrViewGainMin{}--\claimWrViewGainMax{} fewer than without views, and misses its targets
on at most \claimWrSemviewsMissMax{} of them over \claimWrOrdersValid{} workload orders.

\subsection{Robustness}
\label{sec:eval-robust}

\paragraph{Unrelated workloads cost nothing measurable.}
On the control workloads of eight nearly uncorrelated predicates, views add no measurable cost
(Table~\ref{tab:e9}; worst case +\claimENineWorstOverhead{}) and reduce calls slightly on three
of the four pools. The shaping sample is
needed anyway, and an uninformative view only adds features that the score model learns to
ignore.

\begin{table}[t]
\caption{Oracle calls on control workloads of nearly uncorrelated predicates (median over
workload orders). Views cost almost nothing when they cannot help.}
\label{tab:e9}
\small
\begin{adjustbox}{max width=\columnwidth}\input{tables/e9_overhead}\end{adjustbox}
\end{table}

\paragraph{Adversarial and inverted views do not break the guarantee.}
We seed the catalog with a misleading view for every predicate: the same predicate text, so
matching ranks it first, but labels drawn at random with the predicate's selectivity. \sys still
meets its targets on at least \claimEOneZeroAdversarialMetMin{} of queries and costs no more than
the view-free operator (cost ratio at most \claimEOneZeroAdversarialCostRatioMax{}, because the
workload's genuine views still accumulate). A similarity cache that trusts the matching text meets
its targets on \claimEOneZeroSimcacheMet{} of queries. A view whose labels are the \emph{inverse}
of the predicate's (with 20\% noise) is not misleading at all for \sys: the score model learns a
negative weight, and calls drop by \claimEOneZeroInvertedSavingMin{}--\claimEOneZeroInvertedSavingMax{}.

\paragraph{Appends.}
We record views for every predicate on the first 80\% of the tuples, append the remaining 20\%,
and run the workload again. The new tuples are unknown in every view. \sys meets its targets on
\claimEOneOneAppendMet{} of queries and needs \claimEOneOneAppendShareMin{}--\claimEOneOneAppendShareMax{}
of the calls of exact evaluation, far less than in a fresh workload because every predicate
already has a view. A similarity cache that reuses the stale labels meets its targets on
\claimEOneOneSimcacheMet{} of queries.

\paragraph{Views survive an oracle change.}
We answer GoEmotions queries under a second oracle, gpt-oss-120b, with views recorded under
Llama~3.3 (Table~\ref{tab:e7}). \sys meets its targets on \claimESevenCrossModelMet{} of queries
and saves \claimESevenCrossModelSaving{} of the calls of the view-free operator, more than with
views from the same oracle, because all views exist from the first query. A similarity cache that
reuses the old oracle's labels for the same predicate meets its targets on
\claimESevenSimcacheCrossMet{} of queries.

\begin{table}[t]
\caption{Views recorded under one oracle and reused under another (GoEmotions, 1{,}000 tuples):
total oracle calls and share of queries that meet both targets.}
\label{tab:e7}
\small
\begin{adjustbox}{max width=\columnwidth}\input{tables/e7_robustness}\end{adjustbox}
\end{table}

\paragraph{Partial views.}
Figure~\ref{fig:e5} keeps a random share of each view's labels. On GoEmotions, where views help
most, savings grow steadily with coverage; half of each view still saves
\claimEFiveGoemotionsHalfSaving{} of the calls of the view-free operator. On the other datasets,
partial views keep the cost within \claimEFivePartialOtherMaxDev{} of the view-free operator,
sometimes slightly above it, because sparse views add features that carry little signal.

\begin{figure}[t]
\centering
\includegraphics[width=\columnwidth]{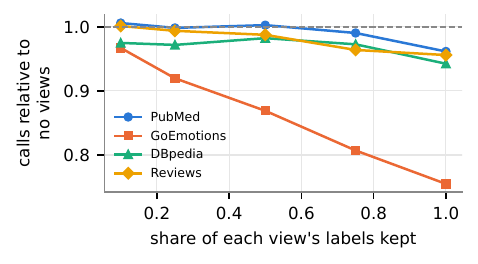}
\caption{Oracle calls relative to the view-free operator as each view keeps only a random share
of its labels. Savings grow with coverage where views are informative; sparse views cost little.}
\label{fig:e5}
\end{figure}

\subsection{Sensitivity, Ablations, and Overheads}
\label{sec:eval-sensitivity}

\paragraph{Targets, $\delta$, and table size.}
Figure~\ref{fig:e4} varies one parameter at a time. Costs rise with stricter targets and fall
slowly with larger $\delta$, since the bounds depend on $\log(1/\delta)$. Larger tables help:
the certification floor does not grow with the table (Proposition~\ref{prop:floor}), so its share
of the calls falls as tables grow, and the operator's cost falls from \claimEFourRowsSmallShare{} of exact evaluation at 1{,}000 tuples to
\claimEFourRowsLargeShare{} at 5{,}000 (median over datasets and orders).

\paragraph{Ablations.}
Table~\ref{tab:ablation} removes or replaces one component at a time, pairing each variant with
\sys on the same workload order. Bounding each view-signature stratum separately (the
per-signature strata variant) costs substantially more: real strata are not pure enough, and
per-stratum bounds dilute $\delta$. Row embeddings are the most valuable free signal; without them
the operator is markedly more expensive on every dataset. The helper's score helps on PubMed and
barely matters elsewhere. Considering all views instead of the top ten barely changes the cost. With refinement, the size of the shaping sample
hardly matters: refinement rounds supply the labels that a larger sample would. The first-draft
operator, without refinement and with Clopper--Pearson bounds on draws with replacement, needs
\claimETwoRefineCutInvMin{}--\claimETwoRefineCutInvMax{} the calls of \sys.

\begin{table}[t]
\caption{Ablations: oracle calls of each variant relative to \sys on the same workload order,
median and [10th, 90th percentile] over orders (lower is better).}
\label{tab:ablation}
\small
\begin{adjustbox}{max width=\columnwidth}\input{tables/ablation}\end{adjustbox}
\end{table}

\paragraph{Latency.}
All methods replay the same recorded labels, so we report latency through its two drivers.
First, the operator's own computation (Section~\ref{sec:system}) takes a median of
\claimEThreeSemviewsMsPerQuery{}\,ms per query. Second, oracle calls are issued in sequential rounds: the shaping sample, the
refinement rounds (median \claimETwoRefitRoundsMedian{}, 90th percentile
\claimETwoRefitRoundsPNineZero{}), then the oracle region and the certification draws in parallel,
then gap closing in batches of 50. Gap closing is
not needed in \claimETwoCloseStepsZeroShare{} of queries; the median number of gap-closing rounds
is at most \claimETwoCloseStepsMedianMax{} on any dataset, and the 90th percentile at most
\claimETwoCloseStepsPNineZeroMax{}. Each round adds the latency of one batch of parallel oracle calls, the main latency cost of
\cf relative to a single-round cascade; fewer, larger rounds would trade calls for latency.

%% file: tables/datasets.tex
\begin{tabular}{@{}lllrr@{}}
\toprule
Dataset & Text & Predicate pool & Preds & Selectivity \\
\midrule
PubMed & Biomedical abstracts & Human-written, external & 25 & 13\% (0.7--51\%) \\
GoEmotions & Social-media comments & From emotion taxonomy & 40 & 23\% (3.0--57\%) \\
DBpedia & Encyclopedia entries & From class taxonomy & 40 & 5\% (0.9--59\%) \\
Reviews & Product reviews & Analyst-style, ours & 40 & 7\% (0.2--78\%) \\
\bottomrule
\end{tabular}

%% file: tables/e1_relations.tex
\begin{tabular}{lrrrrr}
\toprule
Dataset & Preds & $\sqsubseteq_{0.05}$ & $\sqsubseteq_{0.1}$ & Excl.$_{0.05}$ & Judged $\Rightarrow$ (fail) \\
\midrule
PubMed & 25 & 1 & 5 & 123 & 12 (58\%) \\
GoEmotions & 40 & 12 & 29 & 268 & 58 (64\%) \\
DBpedia & 40 & 29 & 37 & 1107 & 62 (45\%) \\
Reviews & 40 & 11 & 21 & 672 & 14 (71\%) \\
\bottomrule
\end{tabular}

%% file: tables/views.tex
\begin{tabular}{@{}lrrrrr@{}}
\toprule
& Related & \multicolumn{4}{c}{Calls saved by views} \\
\cmidrule(l){3-6}
Dataset & preds & related & unrelated & all & last 10 \\
\midrule
PubMed & 4/25 & 11\% & 3\% & 4\% & 5\% \\
GoEmotions & 25/40 & 38\% & 14\% & 25\% & 31\% \\
DBpedia & 9/40 & 19\% & 5\% & 6\% & 2\% \\
Reviews & 9/40 & 30\% & 1\% & 4\% & 4\% \\
\bottomrule
\end{tabular}

%% file: tables/e3_validity.tex
\begin{tabular}{lrrrr}
\toprule
Method & PubMed & GoEmotions & DBpedia & Reviews \\
\midrule
SemViews & 0.0\% & 0.4\% & 0.2\% & 0.7\% \\
SemViews, no views & 0.0\% & 0.9\% & 0.3\% & 0.9\% \\
BARGAIN & 0.5\% & 0.6\% & 0.2\% & 0.3\% \\
LOTUS cascade & \textbf{53.5\%} & \textbf{30.8\%} & \textbf{65.4\%} & \textbf{32.4\%} \\
LOTUS cascade (full-range) & \textbf{53.1\%} & \textbf{10.4\%} & \textbf{57.5\%} & \textbf{24.1\%} \\
Embedding proxy & \textbf{49.0\%} & \textbf{65.4\%} & \textbf{47.3\%} & \textbf{54.2\%} \\
Similarity proxy & \textbf{52.9\%} & \textbf{35.0\%} & \textbf{65.3\%} & \textbf{35.8\%} \\
SemWeave (reimpl.) & \textbf{16.3\%} & \textbf{47.6\%} & \textbf{67.8\%} & \textbf{6.2\%} \\
Similarity cache (0.7) & \textbf{73.0\%} & \textbf{60.5\%} & \textbf{80.3\%} & \textbf{36.4\%} \\
Similarity cache (0.8) & \textbf{66.0\%} & \textbf{42.9\%} & \textbf{73.3\%} & \textbf{34.9\%} \\
Similarity cache (0.9) & \textbf{59.1\%} & \textbf{30.8\%} & \textbf{67.7\%} & \textbf{32.4\%} \\
\bottomrule
\end{tabular}

%% file: tables/rephrased.tex
\begin{tabular}{@{}lrrrr@{}}
\toprule
& \multicolumn{2}{c}{Reviews} & \multicolumn{2}{c}{GoEmotions} \\
\cmidrule(lr){2-3}\cmidrule(l){4-5}
Method & Calls & Miss & Calls & Miss \\
\midrule
SemViews & 0.53 & 0.0\% & 0.47 & 0.2\% \\
SemViews, no views & 0.73 & 0.7\% & 0.74 & 0.5\% \\
BARGAIN & 0.91 & 0.3\% & 0.97 & 0.5\% \\
LOTUS cascade & 0.57 & \textbf{43.3\%} & 0.82 & \textbf{26.9\%} \\
SemWeave (reimpl.) & 0.07 & \textbf{68.3\%} & 0.06 & \textbf{82.5\%} \\
Similarity cache (0.8) & 0.38 & \textbf{52.9\%} & 0.24 & \textbf{63.0\%} \\
Similarity cache (0.9) & 0.55 & \textbf{46.2\%} & 0.79 & \textbf{27.6\%} \\
\bottomrule
\end{tabular}

%% file: tables/e9_overhead.tex
\begin{tabular}{lrrrr}
\toprule
Dataset & Preds & No views & SemViews & Overhead \\
\midrule
PubMed & 8 & 22,818 & 22,823 & +0.0\% \\
GoEmotions & 8 & 34,055 & 32,895 & -3.4\% \\
DBpedia & 8 & 25,848 & 25,400 & -1.7\% \\
Reviews & 8 & 26,500 & 25,431 & -4.0\% \\
\bottomrule
\end{tabular}

%% file: tables/e7_robustness.tex
\begin{tabular}{lrrrr}
\toprule
& \multicolumn{2}{c}{Same-oracle views} & \multicolumn{2}{c}{Views from another oracle} \\
\cmidrule(lr){2-3}\cmidrule(lr){4-5}
Method & Calls & Met & Calls & Met \\
\midrule
SemViews & 32,909 & 99\% & 29,526 & 100\% \\
SemViews, no views & 36,026 & 99\% & 36,026 & 99\% \\
LOTUS cascade & 30,276 & 63\% & 30,276 & 63\% \\
Similarity cache (0.9) & 30,276 & 63\% & 0 & 0\% \\
\bottomrule
\end{tabular}

%% file: tables/ablation.tex
\begin{tabular}{@{}lrrrr@{}}
\toprule
Variant & PubMed & GoEmo. & DBpedia & Reviews \\
\midrule
first draft (no refit, CP) & 1.23\,{\scriptsize[1.18,1.27]} & 1.16\,{\scriptsize[1.14,1.19]} & 1.27\,{\scriptsize[1.19,1.33]} & 1.08\,{\scriptsize[1.04,1.12]} \\
no views & 1.04\,{\scriptsize[1.00,1.08]} & 1.32\,{\scriptsize[1.29,1.35]} & 1.04\,{\scriptsize[0.99,1.16]} & 1.03\,{\scriptsize[1.02,1.08]} \\
no views, no embeddings & 1.50\,{\scriptsize[1.48,1.55]} & 1.61\,{\scriptsize[1.57,1.66]} & 1.67\,{\scriptsize[1.57,1.76]} & 1.22\,{\scriptsize[1.19,1.27]} \\
per-signature strata & 1.61\,{\scriptsize[1.57,1.66]} & 1.42\,{\scriptsize[1.35,1.44]} & 1.64\,{\scriptsize[1.52,1.71]} & 1.23\,{\scriptsize[1.21,1.29]} \\
no row embeddings & 1.38\,{\scriptsize[1.35,1.40]} & 1.18\,{\scriptsize[1.15,1.22]} & 1.51\,{\scriptsize[1.43,1.57]} & 1.15\,{\scriptsize[1.10,1.19]} \\
no helper scores & 1.11\,{\scriptsize[1.06,1.14]} & 1.01\,{\scriptsize[0.99,1.04]} & 0.99\,{\scriptsize[0.90,1.04]} & 0.97\,{\scriptsize[0.93,1.00]} \\
all views as candidates & 1.00\,{\scriptsize[0.98,1.05]} & 0.99\,{\scriptsize[0.96,1.01]} & 1.01\,{\scriptsize[0.95,1.06]} & 0.98\,{\scriptsize[0.95,1.03]} \\
shaping sample 100 & 1.02\,{\scriptsize[0.98,1.06]} & 1.02\,{\scriptsize[0.99,1.04]} & 1.02\,{\scriptsize[0.94,1.11]} & 1.00\,{\scriptsize[0.96,1.06]} \\
shaping sample 400 & 1.01\,{\scriptsize[0.97,1.05]} & 1.02\,{\scriptsize[1.00,1.05]} & 0.99\,{\scriptsize[0.93,1.09]} & 0.99\,{\scriptsize[0.95,1.02]} \\
\bottomrule
\end{tabular}

%% file: sections/discussion.tex
\section{Discussion and Lessons}
\label{sec:discussion}

\paragraph{What limits the savings?}
Two factors bound the savings: the certification floor, and how well past answers predict new
ones. To measure the floor of Proposition~\ref{prop:floor}, we replay every workload with a \emph{perfect} score,
the oracle label itself, while certification samples as usual. \sys then still needs
\claimHeadroomPerfectShare{} of the calls of exact evaluation on average, against
\claimHeadroomOperatorShare{} with its learned score: \claimHeadroomPerfectRareShare{} on predicates with
at most 2\% selectivity and \claimHeadroomPerfectCommonShare{} on those above 10\%. About half of the
remaining cost is thus a property of our 5{,}000-tuple relations, not of the score. Because the
floor does not grow with the relation, its share shrinks on larger tables. To vary only the size,
we simulate larger relations by repeating each relation's tuples 4 and 20 times, which
preserves every label, score, and embedding and each predicate's selectivity.
The share of exact evaluation falls from \claimHeadroomOperatorShare{} to
\claimHeadroomScaleMidShare{} at 20{,}000 tuples and \claimHeadroomScaleLargeShare{} at 100{,}000
(\claimHeadroomScaleLargeShapedShare{} with a shaping sample of 0.8\% of the table), with every miss
rate below $\delta$, while the perfect-score cost drops to \claimHeadroomScaleLargePerfectShare{}. The
simulation cannot tell whether real large tables are easier or harder to score, but it shows where
the limit moves: on large relations the floor nearly vanishes and almost all of the remaining cost
is the score's uncertainty, which refinement reduces and related views can reduce further.

The second factor is the predictive power of past answers. Even with every other predicate's
labels known exactly, a logistic model separates the predicates of a pool with a median AUROC of
\claimEOneAurocPubmed{} on PubMed, \claimEOneAurocReviews{} on Reviews, \claimEOneAurocGoemotions{}
on GoEmotions, and \claimEOneAurocDbpedia{} on DBpedia. Real predicates share many positives and
negatives, but rarely so cleanly that a past answer settles a new one; this is precisely why
views must inform the order and not the answer.

\paragraph{When do views help, and how general is the gain?}
Views help substantially when the workload contains a related predicate and cost almost nothing
otherwise (Sections~\ref{sec:eval-cost} and~\ref{sec:eval-robust}), so how much a whole workload
gains is a property of the workload, not of the operator. Our pools are a mixed test.
GoEmotions, where views help most, is generated from a taxonomy whose group predicates (Ekman
groups, sentiments) are near-unions of other predicates in the pool, and it was one of the two
pilot datasets used to fix hyperparameters; its gain is likely an upper end. PubMed's
human-written external pool has few related predicates and gains little. Even a perfectly warm
catalog, holding the complete oracle labels of every other predicate before each query, saves only
\claimHeadroomWarmExtraMin{}--\claimHeadroomWarmExtraMax{} more calls than the workloads we replay,
most on GoEmotions: on these pools, there is little more for views to give. Analyst and agent
sessions that refine a question step by step may hold more related predicates than a random order
of a pool, but we have not measured such sessions. The gains of the certifier and of free row embeddings over per-query baselines, in contrast,
appear on all four datasets.

\paragraph{Limitations.}
Our main relations have 5{,}000 tuples; the tiling simulation and larger-scale
experiments (Appendix~\ref{app:scale}) show savings growing with the relation, but we have
not measured relations of millions of tuples. Each workload issues every predicate of a pool
in a random order, and two pools are generated from taxonomies; real analyst and agent sessions
and longer-lived catalogs remain to be measured. The two datasets used to fix
hyperparameters, GoEmotions and PubMed, are not fully held out (Appendix~\ref{app:repro}). Finally,
we study filters only.

%% file: sections/related.tex
\section{Related Work}
\label{sec:related}

\paragraph{Semantic operator systems.}
\lotus~\cite{patel2024semops,patel2025lotus} defines semantic operators with a reference-algorithm
semantics, which we adopt. Palimpzest~\cite{liu2024palimpzest} and Abacus~\cite{russo2026abacus}
choose models, prompts, and implementations under cost and quality constraints;
DocETL~\cite{shankar2024docetl} rewrites document pipelines with an agent; UQE~\cite{dai2024uqe}
combines sampling and LLM calls for unstructured queries; Nirvana~\cite{zhu2025nirvana},
CADENZA~\cite{ha2026cadenza}, and PLOP~\cite{mang2026plop} optimize the logical and physical plan
of one query; SEMA-SQL~\cite{lin2026semasql} generates and optimizes hybrid queries; and Cortex
AISQL~\cite{liskowski2026cortex} brings AI-aware optimization and cascades into a production SQL
engine. QUEST~\cite{sun2025quest} reduces the tokens of each extraction call. These systems optimize one
query at a time and do not use the oracle's labels for one predicate to answer another. \sys is a
plug-in for such systems: it changes how one filter is executed and keeps the labels for later
queries.

\paragraph{Per-query optimization with guarantees.}
The \lotus cascade~\cite{patel2025lotus} learns thresholds on a helper score for precision and
recall targets with normal-approximation bounds adapted from SUPG~\cite{kang2020supg};
BARGAIN~\cite{zeighami2025bargain} replaces them with betting-based
bounds~\cite{waudbysmith2020betting} and adaptive sampling, and streaming
cascades~\cite{liskowski2026streaming} refine thresholds as labels arrive.
ScaleDoc~\cite{zhang2025scaledoc}, CSV~\cite{hou2026csv}, Larch~\cite{zhao2026larch}, Task
Cascades~\cite{shankar2026taskcascades}, lightweight proxies~\cite{chung2026proxy}, and a unified
template for semantic filtering~\cite{kim2026fastfilter} cut the cost of one filter further, and
FDJ~\cite{zeighami2025fdj} prunes semantic joins with guarantees. Query-agnostic embedding indexes
such as TASTI~\cite{kang2022tasti} are reused across queries, as are thresholds and proxies for the
\emph{same} predicate in \lotus's \texttt{CascadeOptimizer}~\cite{lotus2026software} and in offline
proxies~\cite{chung2026proxy}. Stretto~\cite{sanmartino2026stretto} and
ConRAD~\cite{horchidan2026conrad} allocate error budgets across the operators of one plan. All of
these pay for every label of the current predicate. \sys is complementary: it shrinks the set on
which a per-query method has to work.

\paragraph{Reuse across predicates and queries.}
SemWeave~\cite{mahmood2026semweave} is closest to our goal. It intercepts semantic filters in
Palimpzest and \lotus, asks an LLM for the entailment relation between a new filter and similar
past ones, and derives tuples' labels from stored results when the relation determines them,
treating a prior result as a materialized semantic view. It checks relations for logical
consistency but offers no guarantee on the result. EVA~\cite{xu2022eva} reuses the outputs of the
same expensive UDF across exploratory queries whose predicates overlap symbolically. Both trust the
inferred relation; we show that judged relations often fail on the oracle's labels and that
statistical containment degrades along chains, and \sys certifies every reuse instead. Exact-match
caches~\cite{dorbani2025flockmtl,kumarasinghe2026ipdb} and memoization
layers~\cite{liskowski2026compositional} hit only on an identical prompt and tuple; Sema~\cite{qi2026sema}
deduces relational constraints from a single predicate. Similarity caches~\cite{gptcache,
regmi2024gptsemcache,gill2025meancache} reuse answers to nearby prompts without guarantees, and
work on their eviction policies~\cite{liu2025semcacheserving} is relevant to view retention.
vCache~\cite{schroeder2025vcache} bounds the per-response error of such a cache, not the precision
and recall of a result set. Semantic caching for OLAP~\cite{bindschaedler2026olapcache} reuses
results through exact intent matches and logical derivations over structured queries.
SemBaker~\cite{dong2026sembaker} compiles semantic operators into generated code that replaces the
oracle, a different trade-off from reusing oracle labels.

\paragraph{Views, caching, and multi-query optimization.}
Answering queries using views~\cite{halevy2001views}, view matching in
optimizers~\cite{goldstein2001matviews}, view selection~\cite{harinarayan1996cubes}, semantic data
caching~\cite{dar1996semantic}, and multi-query optimization~\cite{sellis1988mqo} all rely on
logical containment between predicates over structured data. Semantic predicates are related only
statistically, which is why \sys uses views to plan but not to decide. Probabilistic
predicates~\cite{lu2018probabilistic} train cheap filters ahead of time for one predicate, and
semantic integrity constraints~\cite{lee2025sic} declare and check constraints on semantic
operator outputs; logical relations between predicates could be stated as such constraints but
would still need certification.

\paragraph{Sampling, estimation, and statistical inference.}
Stratified sampling for approximate query
processing~\cite{chaudhuri2007stratified} chooses strata from the workload to
answer aggregates; ABae~\cite{kang2021abae} stratifies by a proxy score for aggregates with
expensive predicates; OLLA~\cite{hui2026olla} and query-centric AI workflows~\cite{wang2026querycentric}
aggregate over LLM outputs online with confidence intervals. Prediction-powered
inference~\cite{angelopoulos2023ppi}, its stratified variant~\cite{fisch2024stratppi}, and active
statistical inference~\cite{zrnic2024active} combine model predictions with labeled samples for
valid estimates. Semantic histograms~\cite{urban2026semhist} and the SemCEB
benchmark~\cite{zimmerer2026semceb} estimate the selectivity of semantic predicates. These methods
estimate aggregates or selectivities; \sys certifies the precision and recall of a set-valued
answer.
Benchmarks of semantic queries such as SemBench~\cite{lao2025sembench} could supply further
workloads.

%% file: sections/conclusion.tex
\section{Conclusion}
\label{sec:conclusion}

\sys shows that a semantic filter can be both certified and cheap. By ordering tuples with side
information that costs no oracle calls, refining the order with labels it must pay for anyway, and
certifying only with fresh samples, it saves several times as many calls as the strongest prior
per-query method with the same guarantee. The same separation makes the oracle's labels safely
reusable across queries: views help in proportion to how related a workload's predicates are, most
when a question returns in other words, whereas uncertified reuse by similarity or entailment
misses its targets on a large share of queries. More broadly, letting side information shape the
plan while only fresh samples enter a bound lets a semantic engine exploit any knowledge it has
accumulated without risking its guarantees. Extending certified reuse to semantic joins and
ranking, and measuring how often real sessions revisit their questions, remain open.

%% file: appendix.tex
\section{Proofs}
\label{app:proofs}

\subsection{Proof of Proposition~\ref{lem:chain}}
\label{app:chain}
Let $P,Q,\mathcal{R}$ be the sets of tuples on which the oracle labels $p$, $q$, $r$ positive, with
$|P|>0$, $|Q|>0$. Assume $|P\setminus Q|\le\eps_1|P|$ and $|Q\setminus \mathcal{R}|\le\eps_2|Q|$. Then
\[
|P\setminus \mathcal{R}| \;=\; |(P\setminus Q)\setminus \mathcal{R}| + |(P\cap Q)\setminus \mathcal{R}|
\;\le\; |P\setminus Q| + |Q\setminus \mathcal{R}|
\;\le\; \eps_1|P| + \eps_2|Q|,
\]
and dividing by $|P|$ gives $\Pr[r{=}0\mid p{=}1]\le \eps_1+\eps_2\,\sel(q)/\sel(p)$.

\emph{Tightness.} Fix $|P|$, $|Q|$, $\eps_1$, $\eps_2$ with
$\eps_2|Q|\le(1-\eps_1)|P|$ (the condition of the proposition; $|Q|\ge(1-\eps_1)|P|$ then holds because $Q\supseteq P\cap Q$). Choose $P\setminus Q$ of size
$\eps_1|P|$, choose $Q\supseteq P\cap Q$ with $|P\cap Q|=(1-\eps_1)|P|$, and let $\mathcal{R}$ exclude
exactly $P\setminus Q$ and $\eps_2|Q|$ tuples of $P\cap Q$ (possible by the first condition),
and include every other tuple. Then $p\contained{\eps_1}q$ and $q\contained{\eps_2}r$ hold with
equality, and $|P\setminus \mathcal{R}|=\eps_1|P|+\eps_2|Q|$, so both inequalities above are equalities.

\subsection{Proof of Theorem~\ref{thm:valid}}
\label{app:valid}
\paragraph{Setting.}
Fix the relation, the oracle labels, the catalog, the helper scores, and the embeddings. Let
$\mathcal{F}$ denote everything \textsc{CertifiedFilter} observes before its certification
draws: the catalog, the shaping sample, the refinement tuples and their labels, and the
fitted score $\hat\pi$. Refinement is part of $\mathcal{F}$: it labels tuples, refits, and re-plans,
but all of it happens before the regions are fixed and the certification draws are made.
Conditional on $\mathcal{F}$, the regions $X$, $A$, $O$ and the sample sizes $n_X$, $n_A$ are
fixed, and so is the split of $\delta$ ($\alpha_X+\alpha_A=\delta$, or $\delta$ when only one
region is non-empty). The certification draws are made after $\mathcal{F}$ with fresh randomness:
$n_X$ tuples uniformly without replacement from $X$ and $n_A$ from $A$. Conditional on
$\mathcal{F}$, the number of positives among the draws from $X$ is hypergeometric with parameters
$(|X|, c_X, n_X)$, where $c_X$ is the (fixed) number of oracle positives in $X$; likewise for $A$.

\paragraph{Coverage event.}
Let $u_X=\max\{c: \Pr_{\mathrm{HG}(|X|,c,n_X)}[x\le x_X]>\alpha_X\}$ and
$\ell_A=\min\{c: \Pr_{\mathrm{HG}(|A|,c,n_A)}[x\ge x_A]>\alpha_A\}$ be the one-sided exact
hypergeometric bounds computed from the numbers of positives $x_X$ and $x_A$ among the draws. By the
standard test-inversion argument (the tail probabilities are monotone in $c$), conditional on
$\mathcal{F}$, $\Pr[c_X>u_X]\le\alpha_X$ and $\Pr[c_A<\ell_A]\le\alpha_A$. Let $E=\{c_X\le u_X\}\cap\{c_A\ge\ell_A\}$.
A union bound gives $\Pr[E^c\mid\mathcal{F}]\le\delta$, hence $\Pr[E^c]\le\delta$ unconditionally.
No other quantity computed later enters a bound.

\paragraph{Deterministic step.}
Gap closing only adds oracle labels of tuples in $X$ or $A$; it draws nothing and recomputes no
bound. Consider any state the algorithm can stop in. Let $P$ be the number of known positives
outside $A$ (in $K$, which includes the shaping and refinement labels, in $O$, and those found
in $X$), $d_X$ and $\nu_X$
the known positives and the still-unknown tuples in $X$, $d_A$ and $e_A$ the known positives and
negatives in $A$. The returned set is all known positives plus $A$ without its known negatives.
Its true positives are $P + c_A$ (no positive of $A$ is dropped), its size is $P+|A|-e_A$, and the
positives it misses are the unknown positives of $X$, numbering $c_X-d_X\le\nu_X$. Therefore
\[
\mathrm{rec}=\frac{P+c_A}{P+c_A+(c_X-d_X)},\qquad \mathrm{prec}=\frac{P+c_A}{P+|A|-e_A}.
\]
Both are non-decreasing in $c_A$, and the recall is non-increasing in $c_X$. On $E$,
$c_A\ge\max(\ell_A,d_A)$ and $c_X-d_X\le\min(\max(u_X-d_X,0),\nu_X)$; substituting yields exactly the
two quantities that the stopping rule requires to reach $\gamma_R$ and $\gamma_P$. So on $E$, every
state in which the algorithm stops meets both targets, and the algorithm always stops (labeling
$X$ and $A$ completely sets $\nu_X=0$ and drops every negative of $A$, so both quantities equal $1$). Hence $\Pr[\text{targets met}]\ge\Pr[E]\ge1-\delta$.

\paragraph{What the proof does not use.}
It uses no property of the views, the helper, the embeddings, or the fitted score; they only
determine $X$, $A$, $O$, $n_X$, $n_A$ through $\mathcal{F}$. The shaping labels are known exactly and
contribute only to $P$, and so do the refinement labels. Sampling without replacement makes the
draws exactly hypergeometric for the finite region, so no approximation is involved; with
replacement and Clopper--Pearson bounds, as in the ablation ``first draft'', the same argument
holds with binomial draws. The guarantee is per
query: for each query in a workload, conditional on everything that happened before it, the
probability of missing a target is at most $\delta$. The miss indicators of a workload therefore
form a sequence whose conditional means are at most $\delta$, so by the Azuma--Hoeffding
inequality the miss rate of a long workload exceeds $\delta$ by more than $O(1/\sqrt{n})$ only with
small probability; the validity experiment of Section~\ref{sec:eval-validity} measures this rate.

\paragraph{General strata.}
The same argument extends to $m$ predefined strata decided individually as accept, reject, or
oracle, with simultaneous bounds from a union bound over $2m$ one-sided bounds (and over a
predefined schedule of $L$ looks, at level $\delta w_i/(2L)$ per bound with $\sum_i w_i\le1$):
the recall and precision ratios are monotone in every stratum's positive count, so plugging in
the bounds is valid for every assignment simultaneously. The ablation ``per-signature strata'' in
Section~\ref{sec:eval-sensitivity} is this variant.

\section{Larger-Scale Experiments}
\label{app:scale}
Our main relations have 5{,}000 tuples, so certification samples are a large part of each query
(Section~\ref{sec:discussion}). We chose that size to limit the cost and carbon footprint of the
experiments: measuring validity requires the oracle's label for every tuple and predicate, so the
labeling cost grows with the number of tuples times the number of predicates. To verify that our findings hold on larger
relations without labeling every predicate, we built two real relations of
\claimEOneThreeRows{} tuples from the sources of the main tables, disjoint from them: DBpedia
articles from the training split (the 5{,}000-tuple table comes from the test split), and Amazon
Electronics reviews from the start of the same source file, without the reviews of the
5{,}000-tuple table. For each relation we chose \claimEOneThreeLlmDbpediaPreds{} predicates of its
pool at random with a fixed seed, had the oracle label every tuple (\claimEOneThreeLlmLabels{}
labels in total; selectivity \claimEOneThreeLlmDbpediaSelMin{}--\claimEOneThreeLlmDbpediaSelMax{} on
DBpedia and \claimEOneThreeLlmReviewsSelMin{}--\claimEOneThreeLlmReviewsSelMax{} on Reviews), and
replayed \claimEOneThreeLlmDbpediaOrders{} random orders of each pool. On DBpedia we also ran all
\claimEOneThreeHumanPreds{} predicates with the human ontology labels as the oracle
(\claimEOneThreeHumanOrders{} orders). That oracle is exact and free but is not the LLM: on the
5{,}000-tuple table the two agree on \claimEOneThreeHumanLlmAgreeMedian{} of tuples (median over
predicates) with a median F1 of only \claimEOneThreeHumanLlmFOneMedian{}, so we use it only as a
second, larger workload. The first 5{,}000 and 20{,}000 tuples of each relation, with the same
labels, serve as the smaller tables, so within each block of Table~\ref{tab:e13} only the table
size changes. The operator is the paper's with two changes: the helper is off, because scoring
every tuple--predicate pair with it locally would take more than a day (without it, \sys needs no
more calls on the 5{,}000-tuple DBpedia and Reviews tables; Table~\ref{tab:ablation}), and above
5{,}000 tuples the certification sample may be as large as in the tiling simulation. Nothing was
tuned on these data.

\begin{table}[t]
\caption{\sys on relations of up to 100{,}000 tuples: oracle calls as a share of exact evaluation
(pooled over queries and orders), for all predicates and for those with selectivity at most 2\%,
and the share of queries meeting both targets ($p=r=0.9$, $\delta=0.05$). Smaller tables are
prefixes of the same relation. Helper off.}
\label{tab:e13}
\centering
\begin{adjustbox}{max width=\columnwidth}\input{tables/e13_scale}\end{adjustbox}
\end{table}

\paragraph{Results.}
\sys stays valid at every size: the share of queries that miss a target never exceeds
\claimEOneThreeLlmDbpediaMaxMiss{} on DBpedia and \claimEOneThreeLlmReviewsMaxMiss{} on Reviews with
the LLM oracle, and \claimEOneThreeHumanMaxMiss{} with human labels, against $\delta=5\%$. Its
relative cost falls as the table grows: from \claimEOneThreeLlmDbpediaShareSmall{} of exact
evaluation at 5{,}000 tuples to \claimEOneThreeLlmDbpediaShareLarge{} at 100{,}000 on DBpedia, from
\claimEOneThreeLlmReviewsShareSmall{} to \claimEOneThreeLlmReviewsShareLarge{} on Reviews, and from
\claimEOneThreeHumanShareSmall{} to \claimEOneThreeHumanShareLarge{} with human labels. Rare
predicates, whose cost on small tables is mostly the certification floor, gain most: on Reviews
their share drops from \claimEOneThreeLlmReviewsRareShareSmall{} to
\claimEOneThreeLlmReviewsRareShareLarge{}. The real tables thus confirm the trend of the tiling
simulation, which predicted \claimHeadroomScaleLargeShare{} at 100{,}000 tuples on average over the
four datasets; DBpedia falls below that and Reviews stays above it. Views add nothing here:
predicates drawn at random from a pool of 40 are rarely related, as in the unrelated control of
Section~\ref{sec:eval-robust}, so the gain at scale comes from the certifier and the learned score.

\section{Reproducibility}
\label{app:repro}
All results replay recorded oracle labels; the scripts, configurations, and recorded labels needed to
regenerate every figure and table are at \url{https://github.com/semviews/lotus-semviews}. Hyperparameters (shaping-sample size 200, regularization, number of candidate views, region
grid) were fixed on synthetic data and on partial label records of GoEmotions and PubMed before the full
runs, and frozen in the experiment configuration. The refinement rounds, their batch, the
planning margin, the $\delta$ split, and sampling without replacement were then chosen in three
tuning rounds on GoEmotions and PubMed only, each round on fresh workload orders (orders that the
reported runs on these two datasets also include), with a check on GoEmotions and PubMed tiled 4 and 20 times
that led to a batch proportional to the relation. The two tuning datasets are therefore not fully
held out; DBpedia and Reviews are. The ablation reports the first-draft operator, without
refinement and with Clopper--Pearson bounds, for comparison.

\section{AI-use disclosure}
Generative AI tools were used to write parts of the code and text under the authors'
direction; the details required by the ACM policy are listed in the repository's
\texttt{docs/AI\_USE.md}.

%% file: tables/e13_scale.tex
\begin{tabular}{@{}lrrrrr@{}}
\toprule
& & \multicolumn{2}{c}{Share of exact} & Rare & Both \\
\cmidrule(lr){3-4}
Oracle & Tuples & \sys & no views & ($\le$2\%) & met \\
\midrule
DBpedia, human labels & 5,000 & 52\% & 58\% & 64\% & 100.0\% \\
(40 predicates) & 20,000 & 31\% & 35\% & 40\% & 100.0\% \\
 & 100,000 & 20\% & 21\% & 23\% & 99.4\% \\
\midrule
DBpedia, LLM oracle & 5,000 & 60\% & 60\% & 67\% & 100.0\% \\
(8 predicates) & 20,000 & 45\% & 46\% & 68\% & 100.0\% \\
 & 100,000 & 34\% & 34\% & 34\% & 99.4\% \\
\midrule
Reviews, LLM oracle & 5,000 & 77\% & 77\% & 95\% & 99.4\% \\
(8 predicates) & 20,000 & 64\% & 63\% & 76\% & 100.0\% \\
 & 100,000 & 55\% & 54\% & 66\% & 100.0\% \\
\bottomrule
\end{tabular}